\newcommand*{\email}[1]{\texttt{#1}}
\newcommand{\OPT}{\text{OPT}}
\newcommand{\mcs}{\mathcal{S}}
\newcommand{\mcp}{\mathcal{P}}
\newcommand{\mcq}{\mathcal{Q}}
\newcommand{\mcx}{\mathcal{X}}
\newcommand{\lpnormmc}{$\ell_p$\textsc{-norm-multiway-cut}\xspace}
\newcommand{\mwc}{\textsc{multiway-cut}\xspace}
\newcommand{\mmmwc}{\textsc{min-max-multiway-cut}\xspace}
\newcommand{\lpnormmwc}{$\ell_p$\textsc{-norm-multiway-cut}\xspace}
\newcommand{\UTC}{\text{UTC}}
\newcommand{\UTCalgo}{\text{UTC-BICRIT-ALGO}\xspace}
\newcommand{\mskp}{\textsc{min-sum-equi-$k$-partitioning}\xspace}
\newcommand{\R}{\mathbb{R}}
\newcommand{\E}{\mathbb{E}}
\newtheorem{theorem}{Theorem}[section]
\newtheorem{lemma}{Lemma}[section]
\newtheorem{proposition}{Proposition}[section]
\newtheorem{claim}{Claim}[section]
\def\final{0}  
\def\iflong{\iffalse}
\newcommand{\knote}[1]{{\color{red}[{\tiny Karthik: \bf #1}]\marginpar{\color{red}*}}}
\newcommand{\wnote}[1]{{\color{blue}[{\tiny Weihang: \bf #1}]\marginpar{\color{blue}*}}}
\newcommand{\todonote}[1]{{\color{red}[{\tiny TODO: \bf #1}]\marginpar{\color{red}*}}}
\newcommand{\knote}[1]{}
\newcommand{\wnote}[1]{}
\newcommand{\todonote}[1]{}
\title{$\ell_p$-norm Multiway Cut\thanks{University of Illinois, Urbana-Champaign, Email: \email{\{karthe, weihang3\}@illinois.edu}. Supported in part by NSF grants CCF-1814613 and CCF-1907937.}
}
\author{
Karthekeyan Chandrasekaran
\and 
Weihang Wang
}
\date{}
\begin{document}

\maketitle

\begin{abstract}
    We introduce and study \lpnormmwc: the input here is an undirected graph with non-negative edge weights along with $k$ terminals and the goal is to find a partition of the vertex set into $k$ parts each containing exactly one terminal so as to minimize the $\ell_p$-norm of the cut values of the parts. This is a unified generalization of min-sum multiway cut (when $p=1$) and min-max multiway cut (when $p=\infty$), both of which are well-studied classic problems in the graph partitioning literature. 
    We show that \lpnormmwc is NP-hard for constant number of terminals and is NP-hard in planar graphs. On the algorithmic side, we design an $O(\log^2 n)$-approximation for all $p\ge 1$. We also show an integrality gap of $\Omega(k^{1-1/p})$ for a natural convex program and an $O(k^{1-1/p-\epsilon})$-inapproximability for any constant $\epsilon>0$ assuming the small set expansion hypothesis. 
\end{abstract}

\newpage
\setcounter{page}{1}
\section{Introduction}\label{sec:intro}
\mwc is a fundamental problem in combinatorial optimization with both theoretical as well as practical motivations. 
The input here is 
an undirected graph $G=(V,E)$ with non-negative edge weights $w:E\rightarrow \R_+$ along with $k$ specified terminals $T=\{t_1, t_2,\ldots, t_k\}\subseteq V$. The goal is to find a partition $\mcp=(P_1,P_2,\ldots,P_k)$ of the vertex set with $t_i\in P_i$ for each $i\in[k]$ so as to minimize the sum of the cut values of the parts, i.e., the objective is to minimize $\sum_{i=1}^k w(\delta(P_i))$, where $\delta(P_i)$ denotes the set of edges with exactly one end-vertex in $P_i$ and $w(\delta(P_i)):=\sum_{e\in \delta(P_i)}w(e)$. 
On the practical side, \mwc has been used to model file-storage in networks as well as partitioning circuit elements among chips---see \cite{DJPSY94, ST04}. 
On the theoretical side, \mwc generalizes the min $(s,t)$-cut problem which is polynomial-time solvable. 
In contrast to min $(s,t)$-cut, \mwc is NP-hard for $k\ge 3$ terminals \cite{DJPSY94}. The algorithmic study of \mwc has led to groundbreaking rounding techniques and integrality gap constructions in the field of approximation algorithms \cite{CKR00, CCT06, KKSTY04, BNS13, BSW17, SV14, MNRS08, AMM17, BCKM20} and novel graph structural techniques in the field of fixed-parameter algorithms \cite{Marx06}. It is known that \mwc does not admit a  $(1.20016-\epsilon)$-approximation for any constant $\epsilon>0$ assuming the Unique Games Conjecture \cite{BCKM20} and the currently best known approximation factor is $1.2965$ \cite{SV14}. 

Motivated by its connections to partitioning and clustering, Svitkina and Tardos \cite{ST04} introduced a local part-wise min-max objective for \mwc---we will denote this problem as \mmmwc: The input here is the same as \mwc while the goal is to find a partition  $\mcp=(P_1,P_2,\ldots,P_k)$ of the vertex set with $t_i\in P_i$ for each $i\in[k]$ so as to minimize  $\max_{i=1}^k w(\delta(S))$. We note that \mwc and \mmmwc differ only in the objective function---the objective function in \mwc is to minimize the sum of the cut values of the parts while the objective function in \mmmwc is to minimize the max of the cut values of the parts. 
\mmmwc can be viewed as a fairness inducing multiway cut as it aims to ensure that no part pays too much in cut value. 
Svitkina and Tardos showed that \mmmwc is NP-hard for $k\ge 4$ terminals and also that it admits an $O(\log^3{n})$-approximation. Bansal, Feige, Krauthgamer, Makarychev, Nagarajan, Naor, and Schwartz subsequently improved the approximation factor to $O(\sqrt{\log{n}\log{k}})$ (which is $O(\log{n})$) \cite{BFKMNNS14}. 

In this work, we study a unified generalization of \mwc and \mmmwc that we term as \lpnormmwc: In this problem, the input is the same as \mwc, i.e., we are given an undirected graph $G=(V,E)$ with non-negative edge weights $w:E\rightarrow \R_+$ along with $k$ specified terminal vertices $T=\{t_1, t_2,\ldots, t_k\}\subseteq V$. The goal is to find a partition 
$\mcp=(P_1,P_2,\ldots,P_k)$ of the vertex set with $t_i\in P_i$ for each $i\in[k]$ so as to minimize the $\ell_p$-norm of the cut values of the $k$ parts---formally, we would like to minimize  
\[\left(\sum_{i=1}^k \left(\sum_{e\in \delta(P_i)} w(e) \right)^p\right)^{\frac{1}{p}}.\]
Throughout, we will consider $p\ge 1$. We note that \lpnormmwc for $p=1$ corresponds to \mwc and for $p=\infty$ corresponds to \mmmwc. We emphasize that \lpnormmwc could also be viewed as a multiway cut that aims for a stronger notion of fairness than \mwc but a weaker notion of fairness than \mmmwc. 
For $k=2$ terminals, \lpnormmwc reduces to min $(s,t)$-cut for all $p\ge 1$ and hence, can be solved in polynomial time. 

\subsection{Our Results}

We begin by remarking that there is a fundamental structural difference between \mwc and \lpnormmwc for $p>1$ (i.e., between $p=1$ and $p>1$). The optimal partition to \mwc satisfies a nice structural property: assuming that the input graph is connected, every part in an optimal partition for \mwc will induce a connected subgraph. Consequently, \mwc is also phrased as the problem of deleting a least weight subset of edges so that the resulting graph contains $k$ connected components with exactly one terminal in each component. However, this nice structural property does not hold for \lpnormmwc for $p>1$ as illustrated by the example in Figure \ref{fig:disconnected}. We remark that Svitkina and Tardos made a similar observation suggesting that the nice structural property fails for \mmmwc, i.e., for $p=\infty$---in contrast, our example in Figure \ref{fig:disconnected} shows that the nice structural property fails to hold for every  $p>1$.  
\begin{figure}[H]
    \centering
    \includegraphics[width=0.7\textwidth]{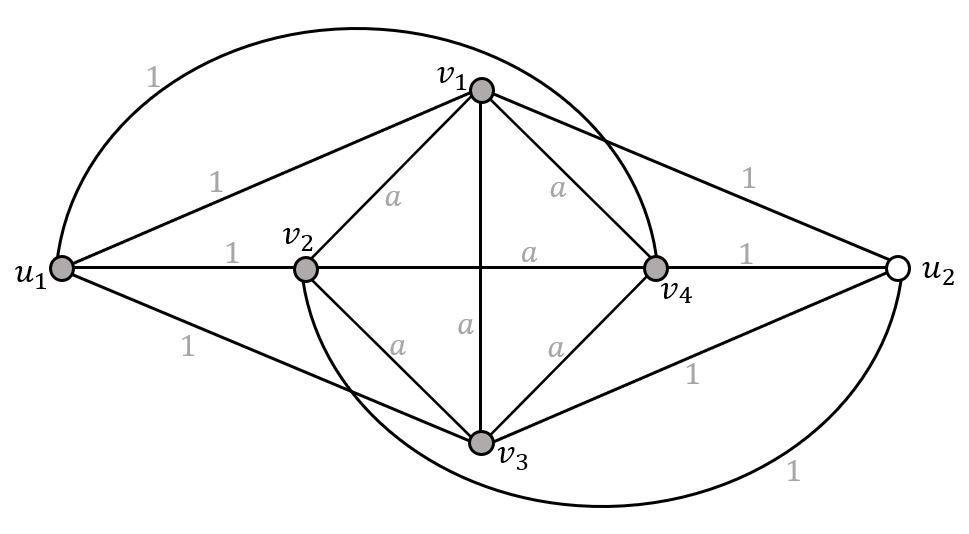}
    \caption{
    An example where the unique optimum partition for \lpnormmwc for $k=5$ induces a disconnected part for every $p>1$. The edge weights are as shown with $a:=8^{p/(p-1)}$ and the set of terminals is  $\{u_1,v_1,v_2,v_3,v_4\}$. A partition that puts $u_2$ with one of the terminals in $\{v_1, v_2, v_3, v_4\}$ (and isolates the remaining terminals) has $\ell_p$-norm objective value $((3a+3)^p + 3(3a+2)^p + 4^p)^{1/p}$ and the partition that puts $u_2$ with $u_1$ (and isolates the remaining terminals) has $\ell_p$-norm objective value $(4(3a+2)^p + 8^p)^{1/p}$---the latter is strictly cheaper by the choice of $a$. 
}
    \label{fig:disconnected}
\end{figure}

We now discuss our hardness results for \lpnormmwc. 
\begin{theorem}\label{thm:np-hardness}
We have the following hardness results for \lpnormmwc. 
\begin{enumerate}
    \item \lpnormmwc is NP-hard for every $p>1$ and every $k\ge 4$. 
    \item \lpnormmwc in planar graphs is NP-hard for every $p>1$. 
\end{enumerate}
\end{theorem}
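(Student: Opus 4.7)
The common ingredient for both items is the strict convexity of $x \mapsto x^p$ for $p > 1$: among nonnegative $r$-tuples with a fixed coordinate sum, the $\ell_p$-norm is uniquely minimized by the balanced tuple. My plan is therefore to reduce from a balanced-partitioning problem and to detect yes/no from the value of an optimal $\ell_p$-norm multiway cut.

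For the first item I would reduce from \textsc{partition}: given positive integers $a_1,\ldots,a_n$ summing to $2B$, decide whether some subset sums to $B$. Build a graph on terminals $t_1,t_2,t_3,t_4$ together with item vertices $u_1,\ldots,u_n$, where, for each $i$, the edges $(u_i,t_1)$ and $(u_i,t_2)$ have weight $M$ and the edge $(u_i,t_3)$ has weight $a_i$; leave $t_4$ isolated. The first step is a ``no escape'' argument: for $M$ sufficiently large (say $M \ge 10 n B$), no optimal partition places any $u_i$ outside $P_1 \cup P_2$, because moving an item to $P_3$ or $P_4$ injects an additive $\Theta(M)$ into the cut of that part, and the resulting $\Theta(M^p)$ increase in the corresponding summand of $\ell_p^p$ strictly dominates every other change. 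Once items are restricted to $P_1 \cup P_2$, a direct tally gives $w(\delta(P_1)) = nM + \sigma_1$, $w(\delta(P_2)) = nM + \sigma_2$, $w(\delta(P_3)) = 2B$ and $w(\delta(P_4)) = 0$, where $\sigma_j := \sum_{u_i \in P_j} a_i$ and $\sigma_1 + \sigma_2 = 2B$. Strict convexity of $x^p$ then pins the optimum uniquely at $\sigma_1 = \sigma_2 = B$, which is exactly the yes-case of \textsc{partition}. For $k > 4$, I would append $k-4$ extra isolated terminals, each contributing $0$ to the norm.

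The construction above is not planar since each $u_i$ is adjacent to three distinct terminals, embedding a $K_{3,n}$ minor once $n \ge 3$. For the second item my plan is to replace the direct edges $(u_i,t_1)$ and $(u_i,t_2)$ by a planar gadget: arrange the items along a planar backbone path with $t_1, t_2, t_3$ on the outer face, and simulate each weight-$M$ connection to $t_1$ (resp.\ $t_2$) via short local paths of weight-$M$ edges routed along opposite sides of the backbone, so that the cost of separating $u_i$ from $t_1$ (resp.\ $t_2$) is still exactly $M$. The light edge $(u_i, t_3)$ of weight $a_i$ is routed as a single planar edge to $t_3$ on the outer face. After this substitution, the cut values of the two balancing parts remain of the form (fixed background)~$+\ \sigma_j$, and the rest of the item-1 analysis carries over unchanged.

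The main obstacle is the planar step: one must exhibit a planar gadget whose contribution to every part decomposes cleanly into a fixed background plus a single $\sigma_j$-term, and rule out gadget-level deviations (e.g.\ cutting the backbone in a way that avoids paying $M$). This calls for careful local bookkeeping of the contribution of every gadget vertex to every part and of the cost of deviating from the intended placement. Granted that, the integer gap between yes- and no-instances of \textsc{partition} translates to an $\Omega(M^{p-1})$ gap in $\ell_p^p$ between balanced and unbalanced splits, which is polynomially detectable and renders the reduction well-defined.
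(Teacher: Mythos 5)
Your first reduction is a genuine alternative to the paper's. The paper reduces from \textsc{bisection} by attaching four new terminals $\{u,d,\ell,r\}$ to the whole bisection graph with two carefully tuned parameters $a$ and $b$; you reduce from \textsc{partition} with a star-like gadget and let strict convexity of $x\mapsto x^p$ pin down the balanced split. The accounting $w(\delta(P_1))=nM+\sigma_1$, $w(\delta(P_2))=nM+\sigma_2$, $w(\delta(P_3))=2B$, $w(\delta(P_4))=0$ is correct, and the approach is sound. However, the stated $M\ge 10nB$ is genuinely too small, not merely unproven to suffice: moving a heavy item $u_i$ with $a_i\approx B/2$ into $P_3$ and rebalancing the rest reduces $\sigma_1,\sigma_2$ by roughly $B/4$ each, so $(nM+\sigma_1)^p+(nM+\sigma_2)^p$ drops by $\Theta\bigl(Bp(nM)^{p-1}\bigr)$, while $w(\delta(P_3))^p$ rises by only $\Theta\bigl(M^p\bigr)$; for $p=3$, $n=100$, $M=10nB$ the drop ($\approx 1.5\cdot 10^{10}B^3$) exceeds the rise ($\approx 8\cdot 10^{9}B^3$), so the intended partition is \emph{not} optimal and the no-escape step fails. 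Comparing $(2M)^p-(2B)^p$ against $2Bp(nM+B)^{p-1}$ via the Mean Value Theorem shows one needs $M=\Omega\bigl(Bp\,n^{p-1}/2^{p}\bigr)$; this is still polynomial in the input for any fixed $p>1$, so the reduction is repairable, but the degree of $M$ in $n$ must grow with $p$. Also, the yes/no gap in $\ell_p^p$ is $\Theta\bigl((nM)^{p-2}\bigr)$, not $\Omega(M^{p-1})$; it is still positive for $p>1$, so correctness survives, but the stated rate is wrong.

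Item~2 has a real gap. You keep $k=4$ terminals and attempt to planarize by replacing each heavy edge $(u_i,t_1)$, $(u_i,t_2)$ with a path of weight-$M$ edges, but the interior path vertices are free variables: they may be split across parts, dumped into $P_3$ or $P_4$, or the path may be cut more than once, and each such deviation must be excluded for all $n$ gadgets simultaneously by a fresh $M$-dominance argument. The gadget is not specified, and the analysis is explicitly conditional (``Granted that, \dots''). There is also a scope issue: your plan, if completed, would prove NP-hardness of planar \lpnormmwc with a \emph{constant} number of terminals --- strictly stronger than what the paper proves (which explicitly notes its planar hardness requires $k$ arbitrary), and in contrast with $p=1$, where planar \mwc with constant $k$ is polynomial-time solvable. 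The paper takes a genuinely different route here: it reduces from the strongly NP-hard \textsc{3-partition}, builds a small, fixed, pairwise-disjoint planar gadget per item, each carrying its own three terminals $x_i^1,x_i^2,x_i^3$, plus $m$ isolated bucket terminals $t_1,\ldots,t_m$. Disjointness makes planarity immediate and localizes the cut analysis to one gadget at a time via Proposition~\ref{prop:two mean values}; that structural move is what your sketch is missing and, as far as I can see, cannot be recovered while keeping $k$ constant.
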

We note that the case of $p=1$ and $p=\infty$ are already known to be hard: \mwc is NP-hard for $k=3$ terminals and is NP-hard in planar graphs when $k$ is arbitrary (i.e., when $k$ is not a fixed constant) \cite{DJPSY94}; \mmmwc is NP-hard for $k=4$ terminals and is NP-hard in trees when $k$ is arbitrary \cite{ST04}. Our NP-hardness in planar graphs result also requires $k$ to be arbitrary. 



Given that the problem is NP-hard, we focus on designing approximation algorithms. We show the following result: 

\begin{theorem}\label{thm:approx-algo}
There exists a polynomial-time $O(\log^{1.5}n\log^{0.5}{k})$-approximation for \lpnormmc for every $p\ge 1$, where $n$ is the number of vertices and $k$ is the number of terminals in the input instance. 
\end{theorem}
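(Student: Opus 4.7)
The plan is to reduce \lpnormmwc to a bicriteria variant of a budgeted multiway cut problem---call it \UTC (Uniform Target Cut)---and design the bicriteria algorithm $\UTCalgo$ for \UTC by adapting the min-max multiway cut algorithm of Bansal et al. In \UTC one is given a multiway cut instance together with a target budget $B$, and the task is to find a partition in which every part has cut value at most $B$. A bicriteria algorithm $\UTCalgo$ should return, whenever a budget-$B$ feasible partition exists, a partition with every part's cut value at most $\alpha B$. Using the CKR-style LP relaxation augmented with per-part budget constraints, I expect the Bansal-et-al.\ threshold-rounding to yield $\alpha = O(\sqrt{\log n\,\log k})$, because the budget enters the LP as a hard cap on each part and the rounding analysis already controls the maximum cut of any single part.

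For the reduction from \lpnormmwc to \UTC, I plan to pay an additional $O(\log n)$ factor via a dyadic bucketing of the optimal cut values. Let $B^*$ denote the optimum $\ell_p$-norm, which can be guessed within a factor of two by binary search. In the optimal partition every cut value $x_i^*$ satisfies $x_i^*\le B^*$; bucket each terminal $t_i$ by the unique $j\ge 0$ such that $x_i^*\in (B^*/2^{j+1},\,B^*/2^j]$. There are $L=O(\log n)$ non-empty buckets, with sizes $k_j$ obeying $\sum_j k_j(B^*/2^j)^p\le 2^p(B^*)^p$ by a direct calculation. I then invoke $\UTCalgo$ once for each budget $B^*/2^j$ and combine the outputs so that every terminal's part is inherited from the invocation matching its bucket; for instance, processing buckets from the smallest budget upward while contracting already-committed parts before the next invocation. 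If this combination inflates each terminal's final cut by only $O(\log n)$ relative to its bucket target, then a terminal in bucket $j$ ends with cut at most $O(\alpha\log n)\cdot B^*/2^j$, and the $\ell_p$-norm of the output is at most
\[
\left(\sum_j k_j\bigl(O(\alpha\log n)\cdot B^*/2^j\bigr)^p\right)^{1/p}\le O(\alpha\log n)\bigl(2^p(B^*)^p\bigr)^{1/p}=O(\alpha\log n)\,B^*,
\]
which with $\alpha=O(\sqrt{\log n\log k})$ yields the claimed $O(\log^{1.5} n\,\log^{0.5} k)$ approximation.

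The main obstacle I anticipate is executing the bucket-by-bucket combination so that the cumulative effect of the $O(\log n)$ invocations of $\UTCalgo$ inflates each terminal's cut by only an $O(\log n)$ factor. Concretely, I need to argue that contracting the parts committed at smaller budgets preserves feasibility of \UTC at the next budget level---that is, the restriction of the optimal $\ell_p$-norm partition to the contracted instance is still budget-feasible---and that the charging of edges across these invocations only compounds logarithmically. A careful edge-charging scheme, together with a small strengthening of the min-max rounding analysis so that it remains robust under contractions, should close the gap. A secondary subtlety is that the bucket assignment of terminals is not known a priori, so the algorithm will need to enumerate (polynomially many) plausible assignments or else adapt to discover them during the combination step.
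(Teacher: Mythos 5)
Your proposal takes a genuinely different route from the paper, and it contains gaps that you flag yourself but that are substantial enough to leave the argument incomplete.

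The paper's ``\UTC'' is the \emph{Unbalanced Terminal Cut} problem of Bansal et al.: find a \emph{single} set $S$ containing at most one terminal, with $y(S)\ge \tau\,y(V)$, minimizing $w(\delta(S))$. The paper runs this single-set subroutine inside a multiplicative-weights loop to produce an $O(k\log n)$-sized \emph{covering collection} of low-cut sets (Lemma~\ref{lemma:multi weight update}), then uncrosses the collection into a partition using posimodularity (Lemma~\ref{lemma:uncross}), and finally aggregates the $O(k\log n)$ parts into $k$ parts paying a $\log^{p-1}n$ factor via Jensen (Lemma~\ref{lemma:aggregate}). Your ``Uniform Target Cut'' is a different object---a full multiway cut with a uniform per-part budget---so you are effectively proposing to invoke (something like) the entire Bansal et al.\ min-max pipeline as a black box once per dyadic budget level and then combine the resulting partitions. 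That is a fundamentally different decomposition of the problem.

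The concrete gaps are the following. (1)~The bucket assignment of terminals is not known to the algorithm. Enumerating assignments of $k$ terminals to $L=\Theta(\log n)$ levels is $L^k$ and hence not polynomial for general $k$; ``adapt to discover them'' is not an algorithm. (2)~Your per-budget call is not a well-defined instance. When you invoke $\UTCalgo$ with budget $B^*/2^j$ you need a partition in which \emph{all} parts have cut at most $B^*/2^j$, but the graph still contains terminals whose optimal cuts are much larger; either you include them (and there is no feasible solution at this budget) or you exclude them (and you need a ``partial multiway cut'' primitive that assigns only a subset of terminals while leaving the remaining vertices unallocated---this is not what the min-max algorithm provides). (3)~Contracting committed parts can destroy budget-feasibility of the residual optimal partition: a committed part $P$ need not be contained in a single optimal part $Q_i^*$, so $Q_i^*\setminus P$ can have cut larger than $Q_i^*$, and a single $Q_i^*$ may intersect many committed parts, so these increases accumulate. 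Bounding that accumulation is exactly the kind of issue the paper handles via an \emph{explicit} posimodular uncrossing argument; you would need an analogous argument, and you have not supplied one. (4)~Even if (1)--(3) were resolved, the claimed $O(\log n)$ combination overhead is asserted via ``a careful edge-charging scheme'' that is not specified; since edges can appear in the boundary of committed parts at many budget levels, the charging is non-trivial.

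In short, the dyadic-bucketing idea is an interesting alternative framing, but as written it trades the paper's three concrete steps (multiplicative weights to build a cover, posimodular uncrossing, Jensen-based aggregation) for three unresolved subproblems of comparable difficulty. If you want to pursue this route, the most promising path is to notice that the ``partial multiway cut at budget $B$'' primitive you need can be built from the single-set \UTC subroutine plus multiplicative weights---at which point you have largely reconstructed the paper's first step, and the posimodularity uncrossing is the natural tool to make contraction safe.
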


We note that our approximation factor is $O(\log^2 n)$ since $k\le n$. While it might be tempting to design an approximation algorithm by solving a convex programming relaxation for \lpnormmwc and rounding it, we rule out this approach: the natural convex programming relaxation has an integrality gap of $\Omega(k^{1-1/p})$---see Section \ref{sec:convex-program}. 
Hence, our approach for the approximation algorithm is not based on a convex program but instead based on combinatorial techniques. 

For comparison, we state the currently best known approximation factors for $p=1$ and $p=\infty$: \mwc admits a $1.2965$-approximation via an LP-based algorithm \cite{SV14} and \mmmwc admits an $O(\sqrt{\log{n}\log{k}})$-approximation based on a bicriteria approximation for the small-set expansion problem \cite{BFKMNNS14}. 

As a final result, we show that removing the dependence on the number $n$ of vertices in the approximation factor of \lpnormmwc is hard assuming the small set expansion hypothesis \cite{RST12}---see Section \ref{sec:inapprox}. In particular, we show that achieving a $(k^{1-1/p-\epsilon})$-approximation for any constant $\epsilon>0$ is hard. We note that there is a trivial  $O(k^{1-1/p})$-approximation for \lpnormmwc (see Section \ref{sec:trivial-approx}). 

\subsection{Outline of techniques}
We briefly outline the techniques underlying our results. 

\paragraph{Hardness results.} We show hardness of \lpnormmwc for $k=4$ terminals by a reduction from the graph bisection problem (see Section \ref{sec:np-hard-constant-many-terminals} for a description of this problem). Our main tool to control the $\ell_p$-norm objective in our hardness reduction is the Mean Value Theorem and its consequences (see Propositions \ref{prop:MVT} and \ref{prop:two mean values}). 
In order to show NP-hardness of \lpnormmwc in planar graphs, we reduce from the $3$-partition problem (see Section \ref{sec:np-hard-planar-graphs} for a description of this problem). We do a gadget based reduction where the gadget is planar. We note that the number of terminals in this reduction is not a constant and is $\Omega(n)$, where $n$ is the number of vertices. 
Once again, we rely on the Mean Value Theorem and its consequences to control the $\ell_p$-norm objective in the reduction. 
We mention that the starting problems in our hardness reductions are inspired by the hardness results shown by Svitkina and Tardos for \mmmwc: they showed that \mmmwc is NP-hard for $k=4$ terminals by a reduction from the graph bisection problem and that \mmmwc is NP-hard in trees by a reduction from the $3$-partition problem. We also use these same starting problems, but our reductions are more involved owing to the $\ell_p$-norm nature of the objective. 

\paragraph{Approximation algorithm.} For the purposes of the algorithm, we will assume knowledge of the optimum value, say $\OPT$---such a value can be guessed within a factor of $2$ via binary search. Our approximation algorithm proceeds in three steps. We describe these three steps now. 

In the first step of the algorithm, we obtain a collection $\mcs$ of subsets of the vertex set satisfying four properties: (1) each set $S$ in the collection $\mcs$ has at most one terminal, (2) the $\ell_p$-norm of the cut values of the sets in the collection raised to the $p$th power is small, i.e., $\sum_{S\in \mcs}w(\delta(S))^p=(\beta^p \log n) \OPT^p$ where $\beta = O(\sqrt{\log{n}\log{k}})$, (3) the number of sets in the collection $\mcs$ is  $O(k\log{n})$, and (4) the union of the sets in the collection $\mcs$ is $V$. We perform this first step via a multiplicative updates method. For this, we use a bicriteria approximation algorithm for the unbalanced terminal cut problem which was given by Bansal et al \cite{BFKMNNS14} (see Section \ref{sec:prelims} for a description of the unbalanced terminal cut problem and the bicriteria approximation). 

Although property (2) gives a bound on the $\ell_p$-norm of the cut values of the sets in the collection $\mcs$ relative to the optimum, the collection $\mcs$ does not correspond to a feasible multiway cut: recall that a feasible multiway cut is a  partition $\mcp=(P_1, \ldots, P_k)$ of the vertex set where each $P_i$ contains exactly one terminal. The objective of the next two steps is to refine the collection $\mcs$ to achieve feasibility without blowing up the $\ell_p$-norm of the cut values of the parts. 

In the second step of the algorithm, we uncross the sets in the collection $\mcs$ to obtain a partition $\mcq$ without increasing the cut values of the sets. We crucially exploit the posimodularity property of the graph cut function to achieve this: posimodularity states that for all subsets $A, B\subseteq V$ of vertices, either $w(\delta(A))\ge w(\delta(A-B))$ or $w(\delta(B))\ge w(\delta(B-A))$. We iteratively consider all pairs of crossing subsets $A,B$ in the collection $\mcs$ and replace $A$ with $A-B$ if $w(\delta(A))\ge w(\delta(A-B))$  or replace $B$ with $B-A$ if $w(\delta(B))\ge w(\delta(B-A))$. The outcome of this step is a partition $\mcq$ of the vertex set $V$ satisfying three properties: (i) each part $Q$ in the partition $\mcq$ has at most one terminal, (ii) the $\ell_p$-norm of the cut values of the parts in the partition $\mcq$ raised to the $p$th power is still small, i.e., $\sum_{Q\in \mcq}w(\delta(Q))^p=(\beta^p \log n) \OPT^p$, and (iii) the number of parts in the partition $\mcq$ is $O(k\log{n})$. 

Once again, we observe that the partition $\mcq$ at the end of the second step may not correspond to a feasible multiway cut: we could have more than $k$ parts in $\mcq$ with some of the parts having no terminals. We address this issue in the third step by a careful aggregation. 

For the third step of the algorithm, let $Q_i$ be the part in $\mcq$ that contains terminal $t_i$---we have $k$ such parts by property (i)---and let $R_1, \ldots, R_t$ be the remaining parts in $\mcq$ that contain no terminals. We will aggregate the remaining parts of $\mcq$ into the $k$ parts $Q_1, \ldots, Q_k$ without blowing up the $\ell_p$-norm of the cut value of the parts. By property (iii), the number of remaining parts $t$ is $O(k\log{n})$.  We create $k$ disjoint buckets $B_1, \ldots, B_k$ where $B_i$ contains the union of $O(\log{n})$ many parts among $R_1, \ldots, R_t$. Finally, we merge $B_i$ with $Q_i$. This results in a partition $\mcp=(Q_1\cup B_1, \ldots, Q_k\cup B_k)$ of $V$ with terminal $t_i$ being in the $i$th part $Q_i\cup B_i$. The key now is to control the blow-up in the $p$th power of the $\ell_p$-norm of the cut values of the parts in $\mcp$: we bound this by a $O(\log^{p-1}{n})$-factor relative to the $p$th power of the $\ell_p$-norm of the cut values of the parts in $\mcq$ via Jensen's inequality; while using Jensen's inequality, we exploit the fact that each bucket contained $O(\log{n})$ many parts. Consequently, using property (ii), the $\ell_p$-norm objective value of the cut values of the parts in the partition $\mcp$ raised to the $p$th power is still small---we show that $\sum_{P\in \mcp}w(\delta(P))^p=\beta^p \log^p n \OPT^p$ and hence, we have an approximation factor of $O(\beta \log{n})$. 

The first step of our algorithm is inspired by the $O(\log{n})$-approximation algorithm for \mmmwc due to Bansal et al \cite{BFKMNNS14}---we modify the multiplicative weights update method and adapt it for \lpnormmwc. Our second and third steps differ from that of Bansal et al. We mention that the second and third steps of our algorithm can be adapted to achieve an $O(\beta \log{n})$-approximation factor for \lpnormmwc for $p=\infty$, but the resulting approximation factor is only $O(\log^2{n})$ which is weaker than the $O(\log{n})$-factor achieved by Bansal et al. The additional loss of $\log{n}$-factor in our algorithm comes from the third step (i.e., the aggregation step). The aggregation step designed in \cite{BFKMNNS14} is randomized and saves the $\log{n}$-factor in expectation, but it does not generalize to \lpnormmwc.
As mentioned before, the second step of our algorithm relies on posimodularity. The posimodularity property of the graph cut function has been used in previous works for \mmmwc in an implicit fashion by a careful and somewhat tedious edge counting argument \cite{ST04, BFKMNNS14}. We circumvent the edge counting argument here by the clean posimodularity abstraction. Moreover, the posimodularity abstraction makes the counting considerably easier for our more general problem of \lpnormmwc. 


\subsection{Related Work}
\lpnormmwc can be viewed as a fairness inducing objective in the context of multiway partitioning problems. Recent works have proposed and studied various fairness inducing objectives for graph cuts and partitioning that are different from \lpnormmwc. We briefly discuss these works here. All of the works mentioned in this subsection consider a more general problem known as \emph{correlation clustering}---we discuss these works by specializing to cut and partitioning problems since these specializations are the ones related to our work. 

Puleo and Milenkovic \cite{PM18} introduced a local vertex-wise min-max objective for min $(s,t)$-cut---here, the goal is to partition the vertex set $V$ of the given edge-weighted undirected graph into two parts $(S, V\setminus S)$ each containing exactly one of the terminals in $\{s,t\}$ so as to minimize  $\max_{v\in V}w(\delta(v)\cap \delta(S))$. The motivation behind this objective is that the cut should be fair to every vertex in the graph, i.e., no vertex should pay a lot for the edges in the cut.  A result of Chv\'{a}tal \cite{Chvatal84} implies that this problem is  $(2-\epsilon)$-inapproximable for every constant $\epsilon>0$. Charikar, Gupta, and Schwartz \cite{CGS17} gave an  $O(\sqrt{n})$-approximation for this problem. Reducing the approximability vs inapproximability gap for this problem remains an intriguing open problem. Kalhan, Makarychev, and Zhou \cite{KMZ19} considered an $\ell_p$-norm version of the objective where the goal is to minimize $(\sum_{v\in V} w(\delta(v)\cap \delta(S))^p)^{1/p}$ and gave an  $O(n^{\frac{1}{2}-\frac{1}{2p}}\log^{\frac{1}{2}-\frac{1}{2p}}{n})$-approximation, thus interpolating the best known results for $p=1$ and $p=\infty$. 

Ahmadi, Khuller, and Saha \cite{AKS19} introduced a min-max version of multicut: the input consists of an undirected graph $G=(V,E)$ with edge weights $w:E\rightarrow \R_+$ along with source-sink terminal pairs $(s_1, t_1), \ldots, (s_k, t_k)$. The goal is to find a  partition $\mcp=(P_1, \ldots, P_r)$ of the vertex set with all source-sink pairs separated by the partition so as to minimize $\max_{i\in [r]}w(\delta(P_i))$. We emphasize that the number of parts here---namely, $r$---is not constrained by the input and hence, could be arbitrary. Ahmadi, Khuller, and Saha gave an $O(\sqrt{\log{n}\max\{\log{k}, \log{T}\}})$-approximation for this problem, where $T$ is the number of parts in the optimal solution. Kalhan, Makarychev, and Zhou \cite{KMZ19} improved the approximation factor to $2+\epsilon$. 

\paragraph{Organization.} We begin with preliminaries in Section \ref{sec:prelims}. We present the complete details of our approximation algorithm and prove Theorem \ref{thm:approx-algo} in Section \ref{sec:approx-algo}. 
We show the hardness results and prove Theorem \ref{thm:np-hardness} in 
Section 
\ref{sec:np-hardness}. We discuss a convex program and its integrality gap in 
Section 
\ref{sec:convex-program}. We discuss the inapproximability and present a trivial $O(k^{1-1/p})$-approximation in 
Section
\ref{sec:inapprox}. We conclude with a few open problems in Section \ref{sec:conclusion}.

\section{Preliminaries}\label{sec:prelims}

We start with notations that will be used throughout. Let $G=(V,E)$ be an undirected graph with edge weight function $w:E\to \R_+$ and vertex weight function $y:V\to \R_+$.
For every subset $S\subseteq V$, we use $\delta_G(S)$ to denote the set of edges that have exactly one end-vertex in $S$ (we will drop the subscript $G$ when the graph is clear from context), and we write $w(\delta(S)):=\sum_{e\in\delta(S)}w(e)$. Moreover, we will use $y(S)$ to refer to $\sum_{v\in S}y(v)$.
We will denote an instance of \lpnormmwc by $(G, w, T)$, where $G=(V,E)$ is the input graph, $w:E\rightarrow \R_+$ is the edge weight function, and $T\subseteq V$ is the set of terminal vertices. We will call a partition $\tilde{\mcp}=(P_1, \ldots, P_r)$ of the vertex set to be a multiway cut if $r=k$ and $t_i\in P_i$ for each $i\in [k]$ and denote the $\ell_p$-norm of the cut values of the parts (i.e., $(\sum_{i=1}^k w(\delta(P_i))^p)^{1/p}$) as the \emph{$\ell_p$-norm objective value of the multiway cut $\tilde{\mcp}$}. 

We note that the function $\mu:\R\to\R$ defined by $\mu(x):=x^p$ is convex for every $p\geq 1$.
We will use Jensen's inequality as stated below in our approximation algorithm as well as our hardness reductions. 
\begin{lemma}[Jensen]\label{lem:jensen}
Let $\mu:\R\to\R$ be a convex function. For arbitrary $x_1,\ldots,x_t\in\R$, we have 
\[\mu\left(\frac{1}{t}\sum_{i=1}^t x_i\right)\leq\frac{1}{t}\sum_{i=1}^t\mu(x_i).\]
\end{lemma}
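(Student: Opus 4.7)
The plan is to prove the inequality by induction on $t$, starting from the two-point defining property of convexity. Recall that $\mu:\R\to\R$ is convex if and only if $\mu(\lambda x + (1-\lambda)y) \leq \lambda \mu(x) + (1-\lambda)\mu(y)$ for all $x,y\in\R$ and all $\lambda\in[0,1]$. Instantiating this with $\lambda = 1/2$ gives the lemma for $t=2$, which serves as the base case. (The case $t=1$ is trivial.)

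For the inductive step, assuming the statement for $t-1$, I would express the $t$-point average as a two-point convex combination:
\[\frac{1}{t}\sum_{i=1}^t x_i \;=\; \frac{t-1}{t}\cdot\left(\frac{1}{t-1}\sum_{i=1}^{t-1} x_i\right) + \frac{1}{t}\cdot x_t.\]
Applying the two-point convexity inequality with $\lambda = (t-1)/t$ to this decomposition yields
\[\mu\!\left(\frac{1}{t}\sum_{i=1}^t x_i\right) \;\leq\; \frac{t-1}{t}\,\mu\!\left(\frac{1}{t-1}\sum_{i=1}^{t-1} x_i\right) + \frac{1}{t}\mu(x_t).\]
The inductive hypothesis, applied to $x_1,\ldots,x_{t-1}$, bounds the first term on the right by $\frac{1}{t-1}\sum_{i=1}^{t-1}\mu(x_i)$. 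Multiplying through by $(t-1)/t$ and combining with the $\frac{1}{t}\mu(x_t)$ term gives exactly $\frac{1}{t}\sum_{i=1}^t\mu(x_i)$, completing the induction.

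This is a textbook argument, so there is no genuine obstacle; the only detail worth flagging is the specific choice of weights $(t-1)/t$ and $1/t$ in the inductive step, which is what makes the two-point application telescope cleanly into the $(t-1)$-point average to which the hypothesis applies.
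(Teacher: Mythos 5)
The paper states Lemma \ref{lem:jensen} as a classical fact and does not supply a proof, so there is no in-paper argument to compare against. Your induction on $t$ — reducing the $t$-point average to a two-point convex combination of the $(t-1)$-point average with $x_t$ using weights $(t-1)/t$ and $1/t$, then invoking the inductive hypothesis — is the standard textbook proof and is correct as written.
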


Our algorithm relies on the graph cut function being symmetric and submodular. We recall that the graph cut function $f: 2^V\rightarrow \R_+$ is given by $f(S):=w(\delta(S))$ for all $S\subseteq V$. Let $f:2^V\rightarrow \R_+$ be a set function. The function $f$ is symmetric if $f(S) = f(V\setminus S)$ for all $S\subseteq V$, submodular if $f(A) + f(B)\ge f(A\cap B)+ f(A\cup B)$ for all $A,B\subseteq V$, and posimodular if $f(A) + f(B)\ge f(A-B) + f(B-A)$ for all $A,B\subseteq V$. Symmetric submodular functions are also posimodular (see Proposition \ref{prop:posimodular})---this fact has been used implicitly \cite{ST04, BFKMNNS14} and explicitly \cite{chekuri-ene-11, CC21} before. 

\begin{proposition}\label{prop:posimodular}
Symmetric submodular functions are posimodular.
\end{proposition}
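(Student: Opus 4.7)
The plan is to derive posimodularity from submodularity applied to a suitably complemented pair of sets, using symmetry to rewrite the resulting terms. The key observation is that the set differences $A-B$ and $B-A$ appearing in the posimodular inequality can be expressed as $A\cap(V\setminus B)$ and $(V\setminus B)\setminus(V\setminus A)$, which suggests applying submodularity to the pair $(A, V\setminus B)$ rather than to $(A,B)$ directly.

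Concretely, I would first invoke submodularity on the sets $A$ and $V\setminus B$ to obtain
\[
f(A) + f(V\setminus B) \;\ge\; f\bigl(A \cap (V\setminus B)\bigr) + f\bigl(A \cup (V\setminus B)\bigr).
\]
Then I would rewrite the right-hand side using set algebra: $A\cap(V\setminus B) = A - B$, and $A\cup(V\setminus B) = V\setminus(B - A)$. This yields
\[
f(A) + f(V\setminus B) \;\ge\; f(A - B) + f\bigl(V\setminus(B - A)\bigr).
\]
Finally, I would apply symmetry twice, namely $f(V\setminus B) = f(B)$ and $f(V\setminus(B-A)) = f(B-A)$, to obtain the posimodular inequality $f(A) + f(B) \ge f(A-B) + f(B-A)$.

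There is essentially no obstacle here; the proof is a short algebraic manipulation once one sees that submodularity should be applied to $A$ and the complement of $B$. The only thing to double-check is the set identity $A \cup (V\setminus B) = V\setminus(B-A)$, which follows immediately from $B - A = B \cap (V\setminus A)$ and De Morgan's law. No case analysis, induction, or auxiliary construction is required.
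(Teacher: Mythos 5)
Your proof is correct and is essentially the same as the paper's argument: both apply submodularity to a pair in which one of the two sets is replaced by its complement (you use $(A, V\setminus B)$, the paper uses $(V\setminus A, B)$), then use symmetry twice to recover $f(A)+f(B)\ge f(A-B)+f(B-A)$. The set identity $A\cup(V\setminus B)=V\setminus(B-A)$ you rely on is verified correctly.
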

\begin{proof}
Let $f:2^V\to\R$ be a symmetric submodular set function on a set $V$, and let $A,B\subseteq V$ be two arbitrary subsets. Then, we have
\begin{align*}
    f(A)+f(B)&=f(V-A)+f(B)\geq f((V-A)\cup B)+f((V- A)\cap B)
    \\&=f(V-(A-B))+f(B-A)=f(A-B)+f(B-A).
\end{align*}
In the above, the first and last equations follow by symmetry and the inequality follows by submodularity. 
\end{proof}

Our algorithm for \lpnormmc relies on an intermediate problem, namely the Unbalanced Terminal Cut problem that we introduce now. In Unbalanced Terminal Cut (UTC), the input  $(G,w,y,\tau, T)$ consists of an undirected graph $G=(V,E)$, an edge weight function $w:E\to\R_+$, a vertex weight function $y:V\to\R_+$, a real value $\tau\in[0,1]$, and a set $T\subseteq V$ of terminal vertices. The goal is to compute
\[\UTC(G,w,y,\tau, T):=\min\left\{w(\delta(S)):S\subseteq V, y(S)\geq \tau \cdot y(V),|S\cap T|\leq 1\right\}.\]
Bansal et al. gave a bicriteria approximation for UTC that is summarized in the theorem below.

\begin{theorem}\label{thm:UTC}\cite{BFKMNNS14}
There exists an algorithm \UTCalgo that takes as input $(G,w,y,\tau, T)$ consisting of an undirected graph $G=(V,E)$, an edge weight function $w:E\to\R_+$, a vertex weight function $y:V\to\R_+$, a number $\tau\in[0,1]$, and a set $T\subseteq V$ of terminal vertices and runs in polynomial time to return a set $S\subseteq V$ such that 
\begin{enumerate}
    \item $|S\cap T|\leq 1$,
    \item $y(S)=\Omega(\tau) y(V)$, and 
    \item $w(\delta(S))\leq\alpha\UTC(G,w,y,\tau, T)$, where $\alpha=O(\sqrt{\log n\log(1/\tau)})$ and $n=|V|$.
\end{enumerate}
\end{theorem}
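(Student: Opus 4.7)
\emph{Proof plan for Theorem~\ref{thm:UTC}.} The approach I would take is to reduce UTC to a volume-constrained, terminal-separating cut problem and then apply SDP-based small-set-expansion rounding.

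First, I would enumerate the guessed ``in-$S$'' terminal. Since $|S \cap T| \leq 1$, either $S$ contains no terminal, or $S \cap T = \{t^*\}$ for some $t^* \in T$; this yields $k+1$ subproblems. In each subproblem, I would contract all ``forbidden'' terminals into a single super-vertex $\bar{t}$ inheriting their combined $y$-weight, reducing the problem to: find $S$ with $\bar{t} \notin S$, $t^* \in S$ when applicable, $y(S) \geq \tau y(V)$, minimizing $w(\delta(S))$. The overall output is the best set returned across the subproblems, which certainly satisfies $|S\cap T|\le 1$.

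Next, I would solve each subproblem via a standard $\ell_2^2$-SDP used for small-set expansion. Assign a unit vector $v_u$ to each vertex with pivot $v_0 := v_{\bar{t}}$, enforce $\ell_2^2$-triangle inequalities, fix $\|v_{t^*} - v_0\|^2 = 1$ when applicable, and add the volume constraint $\sum_u y(u)\,\|v_u - v_0\|^2 \geq \tau y(V)$; the objective is $\min \sum_{uv \in E} w(uv)\,\|v_u - v_v\|^2$. I would round by sampling \emph{orthogonal separators} (in the sense of Chlamt\'ac--Makarychev--Makarychev and Louis--Makarychev) with bias parameter $\tau$: a random subset $A \subseteq V$ in which each vertex $u$ with $\|v_u - v_0\|^2$ close to $1$ is included with probability $\Omega(\tau)$, while each edge $uv$ is cut with probability at most $O(\sqrt{\log n \log(1/\tau)})\cdot \|v_u - v_v\|^2$. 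Anchoring the pivot at $\bar{t}$ keeps $\bar{t} \notin A$ by construction, and a clean-up pass can discard any stray terminal at negligible cost. Repeating the sampling polynomially many times and returning the best outcome delivers a set meeting all three required properties with high probability.

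The main obstacle is obtaining the $\sqrt{\log(1/\tau)}$ refinement rather than the weaker $\sqrt{\log n}$: a single Bourgain embedding or a naive sweep-cut on the SDP only gives the latter. Getting the $\log(1/\tau)$ factor requires the Gaussian-projection-based orthogonal-separator construction, whose analysis crucially exploits that only a $\tau$-fraction of the volume need be captured, so the rounding can concentrate on a small ball in the embedded space. The bicriteria slack in property (ii)---allowing $y(S)$ to be merely $\Omega(\tau)y(V)$ rather than exactly $\tau y(V)$---is precisely what lets the probabilistic rounding succeed with constant probability, since one must absorb a constant-factor loss in captured volume when Markov-averaging over independent samples to guarantee both a good cut value and sufficient mass.
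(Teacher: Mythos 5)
This theorem is cited from \cite{BFKMNNS14} and used as a black box; the paper you were given provides no proof of it, so there is nothing internal to compare against. What you have written is therefore an attempted reconstruction of the Bansal--Feige--Krauthgamer--Makarychev--Nagarajan--Naor--Schwartz argument, not an alternative to anything in this paper.

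As a reconstruction it is broadly on track: the $O(\sqrt{\log n \log(1/\tau)})$ distortion does indeed come from an $\ell_2^2$-SDP relaxation rounded via orthogonal separators in the sense of Chlamt\'{a}c--Makarychev--Makarychev, and the bicriteria slack in the volume guarantee is exactly what lets the probabilistic rounding succeed. Your handling of the terminal constraint differs somewhat from what I recall of the original: you enumerate the at-most-one terminal allowed in $S$ and contract the rest into a pivot super-vertex, whereas the original analysis instead encodes the terminal restriction into the SDP (e.g., by forcing pairwise-orthogonal terminal vectors and anchoring the rounding ball so that at most one terminal can land inside it) and argues directly from the orthogonal-separator properties. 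Your enumerate-and-contract variant is plausible and arguably cleaner conceptually, but you should be careful about two points you currently wave through: (i) when you fix $\|v_{t^*} - v_0\|^2 = 1$ and require $t^* \in S$, the separator sample need not contain $t^*$, so you must condition on or post-process for this event without destroying the cut-probability bound; and (ii) the ``clean-up pass to discard stray terminals'' is vacuous after contraction and suggests a small confusion about which constraints remain active in each subproblem. Neither issue is fatal, but a full proof would need to address them explicitly rather than in passing.
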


\section{Approximation Algorithm}
\label{sec:approx-algo}

Let $\OPT$ be the optimal $\ell_p$-norm objective value of a multiway cut in the given instance. 
For the purposes of the algorithm, we will assume knowledge of a value $D$ such that $D\ge \OPT^p$---such a value can be guessed via binary search. 

Our approximation algorithm to prove Theorem \ref{thm:approx-algo} involves three steps. In the first step of the algorithm, we will obtain a collection $\mcs$ of $O(k\log{n})$ sets whose union is the vertex set $V$ such that each set in the collection has at most one terminal, the cut value of each set is not too large relative to $D$, and the $\ell_p$-norm of the cut values of the sets in the collection is within a polylog$(n)$ factor of $D$ (see Lemma \ref{lemma:multi weight update}). Although the collection $\mcs$ has low $\ell_p$-norm value relative to $D$, the collection $\mcs$ may not be a feasible multiway cut. 
In the second step of the algorithm, we uncross the sets in the collection $\mcs$ without increasing the $\ell_p$-norm of the cut values of the sets in the collection (see Lemma \ref{lemma:uncross}). 
After uncrossing, we obtain a partition, but we could have more than $k$ sets. We address this in our third step, where we aggregate parts to ensure that we obtain exactly $k$ parts (see Lemma \ref{lemma:aggregate}). We rely on Jensen's inequality to ensure that the aggregation does not blow-up the $\ell_p$-norm of the cut values of the sets in the partition. 

We begin with the first step of the algorithm in Lemma \ref{lemma:multi weight update}.  

\begin{lemma}\label{lemma:multi weight update}
There exists an algorithm that takes as input an undirected graph $G=(V,E)$, an edge weight function $w:E\to\R_+$, $k$ distinct terminal vertices $T:=\{t_1,\ldots,t_k\}\subseteq V$ and a value $D>0$ such that there exists a partition $(P_1^*,\ldots,P_k^*)$ of $V$ with $t_i\in P_i^*$ for all $i\in[k]$ and $\sum_{i=1}^k w(\delta(P_i^*))^p\leq D$, and runs in polynomial time to return a collection of sets $\mcs\subseteq 2^V$ that satisfies the following:
\begin{enumerate}
    \item $|S\cap T|\leq 1$ and $w(\delta(S))\leq\beta(2D)^{1/p}$ for every $S\in \mcs$,
    \item $\sum_{S\in\mcs}w(\delta(S))^p=\beta^p(\log n) D$, and
    \item $|\mcs|=O(k\log n)$ and $|\{S\in\mcs:v\in S\}|\geq \log n$ for each $v\in V$,
\end{enumerate}
where $\beta=O(\sqrt{\log n\log k})$.
\end{lemma}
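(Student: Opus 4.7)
The plan is to implement the first step as a multiplicative-weights-update (MWU) procedure that invokes \UTCalgo from Theorem~\ref{thm:UTC} as its oracle. Initialize $y_0(v) = 1$ for every $v \in V$. At each iteration $t$, call \UTCalgo$(G, w, y_{t-1}, 1/(2k), T)$ to obtain a set $S_t$, append $S_t$ to $\mcs$, and halve the weights on $S_t$: $y_t(v) = y_{t-1}(v)/2$ for $v \in S_t$ and $y_t(v) = y_{t-1}(v)$ otherwise. Terminate once $y_t(V) \le 1/n$.

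Properties 1 and 3 would both follow from a single structural inequality, proved by weighted averaging over an optimal partition $(P_1^*, \ldots, P_k^*)$ with $\sum_i w(\delta(P_i^*))^p \le D$: at every iteration $t$,
\[
\UTC\bigl(G, w, y_{t-1}, 1/(2k), T\bigr)^p \;\le\; 2D.
\]
Letting $H_t := \{i : y_{t-1}(P_i^*) \ge y_{t-1}(V)/(2k)\}$, the parts outside $H_t$ hold in total less than $k \cdot y_{t-1}(V)/(2k) = y_{t-1}(V)/2$ of the mass, so $\sum_{i \in H_t} y_{t-1}(P_i^*) > y_{t-1}(V)/2$; since every $P_i^*$ with $i \in H_t$ is a feasible UTC candidate,
\[
\UTC^p \;\le\; \min_{i \in H_t} w(\delta(P_i^*))^p \;\le\; \frac{\sum_i y_{t-1}(P_i^*)\, w(\delta(P_i^*))^p}{y_{t-1}(V)/2} \;\le\; 2D.
\]
Combining with Theorem~\ref{thm:UTC} yields $|S_t \cap T| \le 1$ and $w(\delta(S_t)) \le \alpha(2D)^{1/p}$ with $\alpha = O(\sqrt{\log n \log k}) =: \beta$, proving Property 1. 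For Property 3, the bicriteria lower bound $y_{t-1}(S_t) = \Omega(y_{t-1}(V)/k)$ makes $y_t(V)$ contract by a $(1 - \Omega(1/k))$ factor per iteration, so termination occurs within $O(k \log n)$ iterations; and $y_{T^*}(V) \le 1/n$ forces every $y_{T^*}(v) \le 1/n$, so each vertex was halved at least $\log n$ times.

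The main obstacle is Property 2: the crude combination of Property 1 with $|\mcs| = O(k \log n)$ yields only $\sum_{S \in \mcs} w(\delta(S))^p \le O(\beta^p k D \log n)$, which misses the target by a factor of $k$. To save this factor, I plan to use the sharper form $\UTC^p \le 2\Phi_{t-1}/y_{t-1}(V)$ coming out of the same weighted averaging, where $\Phi_t := \sum_i y_t(P_i^*)\, w(\delta(P_i^*))^p$, together with a fractional amortization that charges iteration $t$'s cost to part $P_i^*$ in the amount $(y_{t-1}(S_t \cap P_i^*)/y_{t-1}(P_i^*)) \cdot w(\delta(P_i^*))^p$. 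Telescoping via $y_t(P_i^*)/y_{t-1}(P_i^*) = 1 - y_{t-1}(S_t \cap P_i^*)/(2 y_{t-1}(P_i^*))$, together with a careful termination argument that pins $y_{T^*}(P_i^*)$ from below (so that each vertex is halved only $O(\log n)$ times, for instance by freezing $y(v)$ once $v$ has been halved $\log n$ times and patching any leftover coverage with singletons at the end), bounds the total per-part charge by $O(\log n) \cdot w(\delta(P_i^*))^p$. Summing over $i \in [k]$ with $\sum_i w(\delta(P_i^*))^p \le D$ then gives $\sum_{S \in \mcs} w(\delta(S))^p \le O(\beta^p \log n)\, D$ as required. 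The delicate step will be aligning the $\UTC^p$ bound with these per-part charges; I expect to handle it by redoing the weighted averaging with weights $y_{t-1}(S_t \cap P_i^*)$ in place of $y_{t-1}(P_i^*)$, absorbing any $k$-dependent slack into the $\alpha^p$ factor.
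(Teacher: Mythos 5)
Your MWU skeleton with the fixed threshold $\tau = 1/(2k)$, and the arguments you give for Properties 1 and 3, are sound. The gap is exactly where you flag it: Property 2. The fractional amortization you sketch does not close it, because it bounds the \emph{total} charge distributed over the parts (by $O(\log n)D$, via the telescoping you describe) but does not show that each iteration's cost is covered by that iteration's charge. Your per-iteration charge $\sum_i (y_{t-1}(S_t\cap P_i^*)/y_{t-1}(P_i^*))\,w(\delta(P_i^*))^p$ is concentrated on whichever parts $S_t$ happens to intersect and vanishes on the rest, whereas the cost $w(\delta(S_t))^p \le \alpha^p\,\UTC^p$ is controlled by the cut value of the cheapest \emph{feasible} part, which $S_t$ need not intersect at all. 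Even in the benign case $S_t\subseteq P_i^*$ for the very part realizing the $\UTC$ bound, if $y_{t-1}(P_i^*)=\Theta(Y_{t-1})$ then the bicriteria guarantee only gives $y_{t-1}(S_t)=\Theta(Y_{t-1}/k)$, so the charge to $P_i^*$ is $\Theta(w(\delta(P_i^*))^p/k)$ while the cost is $\Theta(\alpha^p w(\delta(P_i^*))^p)$: a slack of $\Theta(k)$ that cannot be absorbed into $\alpha^p$ without ruining $\beta=O(\sqrt{\log n\log k})$. The freezing/patching workaround also breaks the termination condition (once every vertex weight is floored at $1/n$, one can never reach $Y^t\le 1/n$) and forces you to re-argue the UTC and coverage invariants under the modified dynamics.

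The idea you are missing is an \emph{adaptive} threshold. In Algorithm~\ref{algo:multiweightsupdate}, each iteration of the while loop sweeps $\tau = 2^{-i}$ for $i=1,\ldots,\log(2k)$ and commits to the first $i$ whose output satisfies $w(\delta(S^t(i)))\le\beta(4D/2^{i})^{1/p}$; Claim~\ref{claim:existence of i_0} shows such an $i$ always exists (via essentially your averaging argument, but additionally matching the scale $2^{-i_0}$ to $y^t(P^*_q)/Y^t$). This couples mass removed to cut paid \emph{within each iteration}: the bicriteria guarantee gives $y^t(S^t)=\Omega(Y^t/2^{i_t})$ while the acceptance condition gives $2^{i_t}\le 4\beta^p D/w(\delta(S^t))^p$, hence $Y^{t+1}\le \bigl(1-\Omega(w(\delta(S^t))^p/(\beta^p D))\bigr)Y^t$. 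Since $Y$ drops from $n$ to at most $1/n$, the standard MWU telescoping over $Y$ alone immediately yields $\sum_t w(\delta(S^t))^p = O(\beta^p\log n)\,D$, with no per-part amortization, no weight-freezing, and no singleton patching.
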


\begin{proof}
We will use Algorithm \ref{algo:multiweightsupdate} to obtain the desired collection $\mcs$. We will show the correctness of Algorithm \ref{algo:multiweightsupdate} based on Claims \ref{claim:existence of i_0}, \ref{claim:coverage} and \ref{claim:Lp norm of mcs}.

\begin{algorithm}
\caption{Multiplicative weights update}
\begin{algorithmic}
\STATE Initialize $t\leftarrow 1$, $\mcs\leftarrow \emptyset$, $y^1(v)=1$ for each $v\in V$, $Y^1= \sum_{v\in V}y^1(v)$ and $\beta=O(\sqrt{\log n\log k})$
\WHILE{$Y^t>\frac{1}{n}$}
\FOR{$i=1,2,\ldots,\log(2k)$}
\STATE Execute $\UTCalgo(G,w,y^t,2^{-i},T)$ to obtain a subset $S^t(i)\subseteq V$
\IF{$w(\delta(S^t(i)))\leq\beta(\frac{4D}{2^i})^{1/p}$}
\STATE Set $S^t=S^t(i)$ and BREAK
\ENDIF
\ENDFOR
\STATE $\mcs\gets\mcs\cup\{S^t\}$.
\FOR{$v\in V$}
\STATE Set $y^{t+1}(v)= \begin{cases}y^t(v)/2& \text{ if }v\in S^t,\\
y^t(v)&\text{ if } v\in V\setminus S^t.\end{cases}$
\ENDFOR
\STATE Set $Y^{t+1}=\sum_{v\in V}y^{t+1}(v)$.
\STATE Set $t\gets t+1$.
\ENDWHILE
\STATE Return $\mcs$
\end{algorithmic}
\label{algo:multiweightsupdate}
\end{algorithm}

Our first claim will help in showing that the set $S^t$ added in each iteration of the while loop satisfies certain nice properties.

\begin{claim}\label{claim:existence of i_0}
For every iteration $t$ of the while loop of Algorithm \ref{algo:multiweightsupdate}, there exists $i\in\{1,2,\ldots,\log(2k)\}$ such that the set $S^t(i)$ satisfies the following conditions:
\begin{enumerate}
    \item $|S^t(i)\cap T|\leq 1$,
    \item $y^t(S^t(i))=\Omega(\frac{Y^t}{2^i})$, and
    \item $w(\delta(S^t(i)))\leq\beta(\frac{4D}{2^i})^{1/p}$.
\end{enumerate}
\end{claim}
\begin{proof}
We have that $\sum_{i=1}^k y^t(P^\ast_i)=y^t(V)$ and 
\[\sum_{i=1}^k w(\delta(P^\ast_i))^p\leq D.\]
Let $L$ be the subset of indices of parts for which the cut value is relatively low: 
\[L:=\left\{j\in[k]:w(\delta(P^\ast_j))^p\leq \frac{2y^t(P^\ast_j)}{Y^t}\cdot D\right\}.\]
It follows that
\begin{align*}
    \sum_{j\in [k]\setminus L}y^t(P^\ast_j)<\sum_{j\in [k]\setminus  L}\frac{w(\delta(P^\ast_j))^p Y^t}{2D}\leq \frac{Y^t}{2}
\end{align*}
and hence, 
\[\sum_{j\in L}y^t(P^\ast_j)=Y^t-\sum_{j\in[k]\setminus  L}y^t(P^\ast_j)>Y^t-\frac{Y^t}{2}=\frac{Y^t}{2}.\]
Since $|L|\leq k$, there exists an index $q\in L$ such that $y^t(P^\ast_q)>Y^t/(2k)$. Let us fix $i_0$ to be an integer such that $y^t(P^\ast_q)\in (Y^t\cdot 2^{-i_0},Y^t\cdot 2^{-i_0+1}]$. Then, we must have $i_0\leq \log(2k)$. We note that the set $P^\ast_q$ satisfies $|P^\ast_q\cap T|=1$ and $y^t(P^\ast_q)>Y^t/(2k)=y^t(V)/(2k)$. This implies $P^\ast_q$ is feasible to the UTC problem on input $(G,w,y^t,1/2^{i_0},T)$. Therefore, according to Theorem \ref{thm:UTC}, the set $S^t(i_0)$ has the following properties:  Firstly, $|S^t(i_0)\cap T|\leq 1$. Secondly, $y^t(S^t(i_0))= \Omega(1/2^{i_0})y^t(V)=\Omega(Y^t/2^{i_0})$. Finally,
\begin{align*}
    w(\delta(S^t(i_0)))&= O(\sqrt{\log n\log(2k)})\cdot\UTC\left(G,w,y^t,\frac{1}{2^{i_0}},T\right)
    \\&= O(\sqrt{\log n\log k})\cdot w(\delta(P^\ast_q))
    \\&= O(\sqrt{\log n\log k})\cdot\left(\frac{2y^t(P^\ast_q)}{Y^t}\cdot D\right)^{\frac{1}{p}}
    \\&= O(\sqrt{\log n\log k})\cdot\left(\frac{2\cdot Y^t\cdot 2^{-i_0+1}}{Y^t}\cdot D\right)^{\frac{1}{p}}
    \\&=O(\sqrt{\log n\log k})\cdot\left(\frac{4D}{2^{i_0}}\right)^{\frac{1}{p}}.
\end{align*}
This completes the proof of Claim \ref{claim:existence of i_0}. 
\end{proof}

For the rest of the proof, we will use the following notation: In the $t$'th iteration of the while loop of Algorithm \ref{algo:multiweightsupdate}, we will fix $i_t\in\{1,2,\ldots,\log(2k)\}$ to be the integer such that $S^t=S^t(i_t)$. We will use $\ell$ to denote the total number of iterations of the while loop. For each $v\in V$, We define $N_v:=|\{t\in[\ell]:v\in S^t\}|$ to be the number of sets in the collection $\mcs$ that contain the vertex $v$.

We observe that for each $v\in V$, we have $y^{\ell+1}(v)=2^{-N_v}$. Claim \ref{claim:existence of i_0} and Theorem \ref{thm:UTC} together imply that the $t$'th iteration of the while loop leads to a set $S^t$ being added to the collection $\mcs$ such that 
\begin{enumerate}
    \item $|S^t \cap T|\le 1$, 
    \item $y^t(S^t)=\Omega(\frac{Y^t}{2^{i_t}})$, and 
    \item $w(\delta(S^t))\le \beta(\frac{4D}{2^{i_t}})^{1/p}$.
\end{enumerate}

Our next claim shows that the number of iterations of the while loop executed in Algorithm \ref{algo:multiweightsupdate} is small. Moreover, the union of the sets in the collection $\mcs$ is the vertex set $V$. 

\begin{claim} \label{claim:coverage}
The number of iterations $\ell$ of the while loop satisfies
$\ell=O(k\log n)$. Moreover, $N_v\geq\log n$ for each $v\in V$.
\end{claim}
\begin{proof}
Upon termination of Algorithm \ref{algo:multiweightsupdate}, we must have $Y^{\ell+1}\leq 1/n$. Combining with the earlier observation that $y^{\ell+1}(v)=2^{-N_v}$ for every $v\in V$, we have that 
\[2^{-N_v}=y^{\ell+1}(v)\leq Y^{\ell+1}\leq \frac{1}{n},\]
which implies that $N_v\geq \log n$ for every $v\in V$. 

It remains to show that $\ell=O(k\log n)$. Consider the $t$th iteration of the while loop for an arbitrary $t\in[\ell]$. By property 2 
of the set $S^t$  stated above, we have that 
$y^t(S^t)\ge cY^t/2^{i_t}\geq cY^t/(2k)$ for some constant $c>0$. Consequently, 
\begin{align*}
    Y^{t+1}=Y^t-\frac{y^t(S^t)}{2}\leq Y^t-\frac{cY^t}{4k}=\left(1-\frac{c}{4k}\right)Y^t.
\end{align*}
Due to the termination condition of the while loop, we know that $Y^\ell>1/n$. Hence, 
\begin{align*}
    \frac{1}{n}<Y^\ell\leq \left(1-\frac{c}{4k}\right)^{\ell-1}Y^1=\left(1-\frac{c}{4k}\right)^{\ell-1}n\leq \exp\left(-\frac{c(\ell-1)}{4k}\right)n.
\end{align*}
Therefore, $\frac{c(\ell-1)}{4k}=O(\log n)$ which implies that $\ell=O(k\log n)$. This completes the proof of Claim \ref{claim:coverage}.
\end{proof}

The next claim bounds the $\ell_p$-norm of the cut values of the sets in the collection $\mcs$. 

\begin{claim}\label{claim:Lp norm of mcs}
The collection $\mcs$ returned by Algorithm \ref{algo:multiweightsupdate} satisfies
$\sum_{S\in\mcs}w(\delta(S))^p=O(\beta^p\log n)\cdot D$.
\end{claim}
\begin{proof}
Consider the $t$th iteration of the while loop for an arbitrary $t\in[\ell]$. By property 3 of the set $S^t$ stated above, 
we have that $w(\delta(S^t))\leq \beta(4D/2^{i_t})^{1/p}$ and consequently, $2^{i_t} \leq 4D\beta^p\cdot w(\delta(S^t))^{-p}$. Moreover, by property 2 of the set $S^t$ stated above, 
we have that $y^t(S^t)\geq cY^t/2^{i_t}$ for some constant $c>0$. Hence, 
\[y^t(S^t)\geq \frac{cY^t}{2^{i_t}}\geq \frac{cY^t\cdot w(\delta(S^t))^p}{\beta^p\cdot 4D}.\]
Therefore, 
\[Y^{t+1}=Y^t-\frac{y^t(S^t)}{2}\leq \left(1-\frac{c\cdot w(\delta(S^t))^p}{\beta^p\cdot 8D}\right)Y^t.\]

Using the fact that $Y^\ell>1/n$, we observe that
\begin{align*}
    \frac{1}{n}&<Y^\ell\leq Y^1\cdot\prod_{t=1}^{\ell-1}\left(1-\frac{c\cdot w(\delta(S^t))^p}{\beta^p\cdot 8D}\right)=n\cdot\prod_{t=1}^{\ell-1}\left(1-\frac{c\cdot w(\delta(S^t))^p}{\beta^p\cdot 8D}\right)
    \\&\leq n\cdot \prod_{i=1}^{\ell-1}\exp\left(-\frac{c\cdot w(\delta(S^t))^p}{\beta^p\cdot 8D}\right)=n\cdot\exp\left(-\frac{c\cdot \sum_{i=1}^{\ell-1}w(\delta(S^t))^p}{\beta^p\cdot 8D}\right).
\end{align*}
This implies that $\frac{c\cdot \sum_{i=1}^{\ell-1}w(\delta(S^t))^p}{\beta^p\cdot 8D}=O(\log n)$, and hence $\sum_{i=1}^{\ell-1}w(\delta(S^t))^p=O(\beta^p\log n)\cdot D$.

In the $\ell$'th iteration of the while loop, we have $w(\delta(S^\ell))\leq\beta(4D/2^{i_\ell})^{1/p}$ by 
property 3 of the set $S^t$ stated above 
and hence $w(\delta(S^\ell))^p\leq\beta^p\cdot 4D/2^{i_\ell}\leq O(\beta^p D)$. Consequently, $\sum_{i=1}^{\ell}w(\delta(S^t))^p=O(\beta^p\log n)\cdot D$. This completes the proof of Claim \ref{claim:Lp norm of mcs}.
\end{proof}

We now show correctness of our algorithm to complete the proof of Lemma \ref{lemma:multi weight update}. Firstly, we note that every $S\in\mcs$ satisfies $|S\cap T|\leq 1$ by property 1 of the set $S^t$ stated above. 
Moreover, we have $w(\delta(S))\leq \beta(4D/2^i)^{1/p}\leq \beta(2D)^{1/p}$, which implies conclusion 1 in Lemma \ref{lemma:multi weight update}. Secondly, Conclusion 2 in Lemma \ref{lemma:multi weight update} is implied by Claim \ref{claim:Lp norm of mcs}. Finally, conclusion 3 of Lemma \ref{lemma:multi weight update} is implied by Claim \ref{claim:coverage} because each iteration of the while loop adds exactly one new set to the collection $\mcs$.

We now bound the run time of Algorithm \ref{algo:multiweightsupdate}. Each iteration of the while loop takes polynomial time due to Theorem \ref{thm:UTC}, and the number of iterations of the while loop is $O(k\log n)$. This implies that the total run time of Algorithm \ref{algo:multiweightsupdate} is indeed polynomial in the size of the input.
\end{proof}

The collection $\mcs$ that we obtain in Lemma \ref{lemma:multi weight update} may not be a partition. 
Our next lemma will uncross the collection $\mcs$ obtained from Lemma \ref{lemma:multi weight update} to obtain a partition without increasing the cut values of the sets. 

\begin{lemma}\label{lemma:uncross}
There exists an algorithm that takes as input a collection $\mcs\subseteq 2^V$ of subsets of vertices satisfying the conclusions in Lemma \ref{lemma:multi weight update} and runs in polynomial time to return a partition $\tilde{\mcq}$ of $V$ such that
\begin{enumerate}
    \item $|Q\cap T|\leq 1$ for each $Q\in\tilde{\mcq}$,
    \item $\sum_{Q\in\tilde{\mcq}}w(\delta(Q))^p\leq \sum_{S\in\mcs}w(\delta(S))^p$, and
    \item the number of parts in $\tilde{\mcq}$ is $O(k\log{n})$.
\end{enumerate}
\end{lemma}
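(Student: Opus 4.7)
My plan is to iteratively \emph{uncross} pairs of crossing sets in $\mcs$ using posimodularity of the graph cut function (Proposition~\ref{prop:posimodular}), producing a laminar family, and then to output the inclusion-maximal sets of this family as the partition $\tilde{\mcq}$. I will call $A, B \in \mcs$ \emph{crossing} if $A \cap B$, $A \setminus B$, and $B \setminus A$ are all non-empty; two sets that do not cross are either disjoint or one contains the other. The invariants I aim to maintain throughout are: every set in the current $\mcs$ contains at most one terminal, every vertex $v \in V$ lies in at least one set of $\mcs$, $\sum_{S \in \mcs} w(\delta(S))^p$ does not increase, and $|\mcs|$ does not grow.

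\textbf{Uncrossing and termination.} As long as a crossing pair $(A,B)$ exists in $\mcs$, I apply posimodularity to obtain $w(\delta(A)) + w(\delta(B)) \geq w(\delta(A-B)) + w(\delta(B-A))$, and hence at least one of $w(\delta(A)) \geq w(\delta(A-B))$ or $w(\delta(B)) \geq w(\delta(B-A))$ must hold. In the first case I replace $A$ by $A - B$ in $\mcs$; in the second I replace $B$ by $B - A$. Since $A$ and $B$ cross, both $A - B$ and $B - A$ are non-empty, so no set disappears; the terminal and cut-value invariants hold because the replacement is a subset of the original; and the coverage invariant is preserved because a vertex in $A \cap B$ still lies in the un-shrunk set. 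To show polynomial termination I will track the potential $\Phi(\mcs) := \sum_{S, T \in \mcs,\, S \neq T} |S \cap T|$. Replacing $A$ by $A - B$ drops the contribution of the ordered pair $(A, B)$ from $|A \cap B| \geq 1$ to $0$, and for every other $C \in \mcs$ it satisfies $|(A-B) \cap C| \leq |A \cap C|$, so no other contribution increases; hence $\Phi$ strictly decreases by at least one integer unit per iteration. Since $\Phi$ is initially at most $n |\mcs|^2 = \poly(n,k)$, the process halts after polynomially many steps, and each step (finding a crossing pair, forming $A - B$ or $B - A$, and checking the cut values) is clearly polynomial work.

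\textbf{Extracting the partition and verifying the conclusions.} Once no crossing pair remains, the family $\mcs$ is laminar, i.e.\ any two of its sets are disjoint or one contains the other. I take $\tilde{\mcq}$ to be the collection of inclusion-maximal sets of this laminar family. Laminarity forces the members of $\tilde{\mcq}$ to be pairwise disjoint; the coverage invariant, combined with the fact that every set of the final $\mcs$ is contained in some inclusion-maximal one, forces $\bigcup \tilde{\mcq} = V$, so $\tilde{\mcq}$ is a genuine partition of $V$. Conclusion $1$ then follows from the terminal invariant, conclusion $2$ from $\sum_{Q \in \tilde{\mcq}} w(\delta(Q))^p \leq \sum_{S \in \mcs_{\text{final}}} w(\delta(S))^p \leq \sum_{S \in \mcs_{\text{initial}}} w(\delta(S))^p$ (discarding the non-maximal sets only decreases the sum, and the cut-value invariant handles the rest), and conclusion $3$ from $|\tilde{\mcq}| \leq |\mcs| = O(k \log n)$. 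The main obstacle I anticipate is the termination argument---exhibiting a monovariant that strictly decreases under each posimodular uncrossing move---together with the bookkeeping required to show that the inclusion-maximal sets of the final laminar family simultaneously cover $V$, are pairwise disjoint, and inherit the ``at most one terminal'' property from the surviving members of $\mcs$.
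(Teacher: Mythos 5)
Your proof is correct and uses essentially the same mechanism as the paper: iteratively uncross via posimodularity of the cut function, keeping the set whose cut value did not increase, and then read off a partition. The paper's version is slightly more economical — it uncrosses \emph{all} intersecting pairs (not only crossing ones), so that nested pairs are collapsed to the empty set and the surviving collection is already pairwise disjoint without a separate ``take maximal sets'' pass; its termination monovariant is the number of intersecting pairs rather than your finer potential $\sum_{S\ne T}|S\cap T|$. Both yield the same three conclusions with the same constants, so the differences are cosmetic.
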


\begin{proof}
For convenience, we will define $f:2^V\to\R_+$ by $f(S):=w(\delta(S))$ for all $S\subseteq V$. We will use Algorithm \ref{algo:uncrossing} to obtain the desired partition $\tilde{\mcq}$ of $V$.
\begin{algorithm}
\caption{Uncrossing}
\begin{algorithmic}
\STATE Initialize $\tilde{\mcq}\gets\mcs$
\WHILE{there exist distinct sets $A,B\in\tilde{\mcq}$ such that $A\cap B\neq\emptyset$}
\IF{$f(A)\geq f(A-B)$}
\STATE Set $A\gets A-B$
\ELSE
\STATE Set $B\gets B-A$
\ENDIF
\ENDWHILE
\STATE Return $\tilde{\mcq}$
\end{algorithmic}
\label{algo:uncrossing}
\end{algorithm}

We now prove the correctness of Algorithm \ref{algo:uncrossing}. We begin by observing that Algorithm \ref{algo:uncrossing} indeed outputs a partition of the vertex set: Firstly, the while loop enforces that the output $\tilde{\mcq}$ satisfies $A\cap B=\emptyset$ for all distinct $A,B\in\tilde{\mcq}$. Secondly, during each iteration of the while loop, the set $\bigcup_{Q\in\tilde{\mcq}}Q$ remains unchanged: In the iteration of the while loop that uncrosses $A,B\in\tilde{\mcq}$, let $A'$ and $B'$ denote the updated sets at the end of the while loop, respectively. Then we must have $A'\cup B'=(A-B)\cup B=A\cup B$ or $A'\cup B'=A\cup (B-A)=A\cup B$. In either case, since $A'\cup B'=A\cup B$, the set $\bigcup_{Q\in\tilde{\mcq}}Q$ remains unchanged after the update. Therefore, we have $\bigcup_{Q\in\tilde{\mcq}}Q=\bigcup_{S\in\mcs}S$. We recall that $\bigcup_{S\in\mcs}S=V$ by conclusion 3 of Lemma \ref{lemma:multi weight update}. Hence, $\tilde{\mcq}$ is indeed a partition of $V$.

Furthermore, each set $Q$ in the output $\tilde{\mcq}$ is a subset of some set $S\in\mcs$. This implies $|Q\cap T|\leq |S\cap T|\leq 1$, thus proving the first conclusion.

To prove the second conclusion, we use posimodularity of $f$ as shown in Proposition \ref{prop:posimodular}. Namely, for every $A,B\subseteq V$, 
\[f(A)+f(B)\geq f(A-B)+f(B-A).\]
Therefore, at least one of the following two hold: either $f(A)\ge f(A-B)$ or $f(B)\ge f(B-A)$. This implies that, by the choice of the algorithm, $\sum_{Q\in \tilde{\mcq}}f(Q)^p$ does not increase. 

To see the third conclusion, we note that after each iteration of the while loop, the size of $\tilde{\mcq}$ is unchanged. Therefore, at the end Algorithm \ref{algo:uncrossing}, we have $|\tilde{\mcq}|=|\mcs|=O(k\log n)$ by Lemma \ref{lemma:multi weight update}.

Finally, we bound the run time as follows. At initialization, there are $O((k\log n)^2)$ pairs $(A,B)\in\tilde{\mcq}^2$ such that $A\cap B\neq\emptyset$. After each iteration of the while loop, the number of such pairs decreases by at least $1$. Therefore, the total number of iterations of the while loop is $O((k\log n)^2)$. Hence, Algorithm \ref{algo:uncrossing} indeed runs in polynomial time.

\end{proof}

The partition $\tilde{\mcq}$ that we obtain in Lemma \ref{lemma:uncross} may contain more than $k$ parts and hence, some of the parts may not contain any terminals. Our next lemma will aggregate the parts in $\tilde{\mcq}$ from Lemma \ref{lemma:uncross} to obtain a $k$-partition that contains exactly one terminal in each part while controlling the increase in the $\ell_p$-norm of the cut value of the parts. 

\begin{lemma}\label{lemma:aggregate}
There exists an algorithm that takes as input a partition $\tilde{\mcq}$ of $V$ satisfying the conclusions in Lemma \ref{lemma:uncross} and runs in polynomial time to return a partition $(P_1,P_2,\ldots,P_k)$ of $V$ such that
\begin{enumerate}
    \item $t_i\in P_i$ for each $i\in[k]$, and
    \item $\sum_{i=1}^k w(\delta(P_i))^p= O((\beta\log n)^p)\cdot D$.
\end{enumerate}
\end{lemma}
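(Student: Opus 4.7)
The plan is to execute the aggregation sketched in the technique overview. First, among the $O(k\log n)$ parts of $\tilde{\mcq}$, identify the $k$ parts $Q_1,\ldots,Q_k$ containing the terminals (with $t_i\in Q_i$; these exist and are distinct because $\{t_1,\ldots,t_k\}\subseteq V=\bigcup_{Q\in\tilde{\mcq}}Q$ and each $Q\in\tilde{\mcq}$ contains at most one terminal by conclusion 1 of Lemma \ref{lemma:uncross}). Let $R_1,\ldots,R_m$ denote the remaining (terminal-free) parts, where $m=|\tilde{\mcq}|-k=O(k\log n)$ by conclusion 3 of Lemma \ref{lemma:uncross}. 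Hence $s:=\lceil m/k\rceil=O(\log n)$.

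Next, partition the index set $[m]$ arbitrarily (e.g., greedily) into $k$ buckets $B_1,\ldots,B_k$ with $|B_i|\le s$ for each $i\in[k]$, and set
\[P_i\;:=\;Q_i\cup\bigcup_{j\in B_i}R_j\qquad\text{for each }i\in[k].\]
Since $\tilde{\mcq}$ is a partition of $V$, the collection $(P_1,\ldots,P_k)$ is also a partition of $V$, and $t_i\in Q_i\subseteq P_i$, which gives conclusion 1.

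For conclusion 2, I will bound the cut of each $P_i$ using subadditivity of the graph cut function together with Jensen's inequality. By subadditivity,
\[w(\delta(P_i))\;\le\;w(\delta(Q_i))+\sum_{j\in B_i}w(\delta(R_j)).\]
Applying Lemma \ref{lem:jensen} to the convex function $\mu(x)=x^p$ on the $|B_i|+1\le s+1$ nonnegative terms on the right-hand side yields
\[w(\delta(P_i))^p\;\le\;(s+1)^{p-1}\left(w(\delta(Q_i))^p+\sum_{j\in B_i}w(\delta(R_j))^p\right).\]
Summing over $i\in[k]$ and using the fact that each part of $\tilde{\mcq}$ appears in exactly one bucket, I get
\[\sum_{i=1}^{k}w(\delta(P_i))^p\;\le\;(s+1)^{p-1}\sum_{Q\in\tilde{\mcq}}w(\delta(Q))^p\;\le\;O(\log^{p-1}n)\cdot\beta^p(\log n)\,D\;=\;O\bigl((\beta\log n)^p\bigr)\cdot D,\]
where the middle inequality uses conclusion 2 of Lemma \ref{lemma:uncross}.

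The step that carries the most weight is the Jensen bound: the factor $(s+1)^{p-1}=O(\log^{p-1} n)$ is exactly what multiplies the pre-existing $\log n$ factor from $\sum_{Q\in\tilde{\mcq}}w(\delta(Q))^p$ to produce the final $\log^p n$ blow-up, and it is essential that each bucket contain only $O(\log n)$ terminal-free parts so that this factor does not grow. The remaining tasks are routine: the aggregation step runs in polynomial time because $|\tilde{\mcq}|$ is polynomial, and $(P_1,\ldots,P_k)$ is clearly a partition with the required terminal placement.
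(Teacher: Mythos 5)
Your proposal is correct and matches the paper's proof essentially step for step: both separate the terminal-containing parts from the terminal-free ones, arbitrarily bucket the $O(k\log n)$ terminal-free parts into $k$ groups of $O(\log n)$ each, merge each bucket with the corresponding terminal part, and then bound the blow-up by subadditivity of the cut function followed by Jensen's inequality with the $(|B_i|+1)^{p-1}=O(\log^{p-1} n)$ factor. The only cosmetic difference is that you introduce the explicit bucket-size parameter $s=\lceil m/k\rceil$ and spell out why there are exactly $k$ terminal-containing parts.
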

\begin{proof}
We will use Algorithm \ref{algo:aggregrating} on input $\tilde{\mcp}$ to obtain the desired partition. 

\begin{algorithm}
\caption{Aggregating}
\begin{algorithmic}
\STATE Let $\mathcal{F}=\{Q\in\tilde{\mcq}:Q\cap T=\emptyset\}$.
\STATE Let $\mcp'=\{Q\in\tilde{\mcq}:Q\cap T\neq\emptyset\}=\{Q'_1,\ldots,Q'_k\}$, where $t_i\in Q'_i$ for each $i\in[k]$.
\STATE Partition the sets in $\mathcal{F}$ into $k$ buckets $B_1,\ldots,B_k$ such that $|B_i|=O(\log n)$ for each $i\in[k]$ (arbitrarily).
\FOR{$i=1,2,\ldots,k$}
\STATE Set $P_i\gets Q'_i\cup\left(\bigcup_{A\in B_i}A\right)$
\ENDFOR
\STATE Return $(P_1,\ldots,P_k)$.
\end{algorithmic}
\label{algo:aggregrating}
\end{algorithm}

The run time of Algorithm \ref{algo:aggregrating} is linear in its input size. We now argue the correctness. 
We note that the third step in Algorithm \ref{algo:aggregrating} is possible because $|\mathcal{F}|\le|\tilde{\mcq}|=O(k\log n)$.

Since $|Q\cap T|\le 1$ for each $Q\in \tilde{\mcq}$, the tuple $(P_1,\ldots,P_k)$ returned by Algorithm \ref{algo:aggregrating} is indeed a partition of $V$ satisfying $t_i\in P_i$ for all $i\in [k]$. We will now bound $\sum_{i=1}^k f(P_i)^p$, where $f: 2^V\rightarrow \R_+$ is given by $f(S):=w(\delta(S))$ for all $S\subseteq V$. We have that 
\begin{align*}
    \sum_{i=1}^k f(P_i)^p&=\sum_{i=1}^k f\left(Q'_i\cup\left(\bigcup_{A\in B_i}A\right)\right)^p\leq \sum_{i=1}^k\left(f(Q'_i)+\sum_{A\in B_i}f(A)\right)^p.
\end{align*}
Since the number of sets in $B_i$ is $O(\log{n})$, we have the following using Jensen's inequality (Lemma \ref{lem:jensen}) for each $i\in [k]$:
\begin{align*}
    \left(f(Q'_i)+\sum_{A\in B_i}f(A)\right)^p \le 
    (|B_i|+1)^{p-1}\left(f(Q'_i)^p+\sum_{A\in B_i}f(A)^p \right)
    = O(\log^{p-1}{n})\left(f(Q'_i)^p+\sum_{A\in B_i}f(A)^p \right).
\end{align*}

Hence, 
\begin{align*}
    \sum_{i=1}^k f(P_i)^p
    &= \sum_{i=1}^k O(\log^{p-1}{n}) \left(f(Q'_i)^p+\sum_{A\in B_i}f(A)^p\right)
    = O(\log^{p-1}{n}) \sum_{Q\in\tilde{\mcq}}f(Q)^p
    \\&= O(\log^{p-1}{n}) \sum_{S\in\mcs}f(S)^p=\beta^p O(\log^{p}{n})D.
\end{align*}
The last but one equality above is due to conclusion 2 of Lemma \ref{lemma:uncross}bbb and the last equality is due to conclusion 2 of Lemma \ref{lemma:multi weight update}. 
Hence, $\sum_{i=1}^k w(\delta(P_i))^p=\sum_{i=1}^k f(P_i)^p= O((\beta\log n)^p)D$.
\end{proof}

Lemmas \ref{lemma:multi weight update}, \ref{lemma:uncross}, and \ref{lemma:aggregate} together lead to an algorithm that takes as input an undirected graph $G=(V,E)$, an edge weight function $w:E\to\R_+$, $k$ distinct terminal vertices $T:=\{t_1,\ldots,t_k\}\subseteq V$, and a value $D>0$ such that there exists a partition $(P_1^*,\ldots,P_k^*)$ of $V$ with $t_i\in P_i^*$ for all $i\in[k]$ such that $\sum_{i=1}^k w(\delta(P_i^*))^p\leq D$, and runs in polynomial time to return a multiway cut $\mcp=(P_1,\ldots,P_k)$ such that 
\begin{align*}
    \left(\sum_{i=1}^kw(\delta(P_i))^p\right)^{\frac{1}{p}}=\left(O((\beta\log n)^p)D\right)^{\frac{1}{p}}=O(\beta\log n)D^{\frac{1}{p}}=O(\log^{1.5}n\log^{0.5}k)D^{\frac{1}{p}}.
\end{align*}

In order to prove Theorem \ref{thm:approx-algo}, we may use binary search to guess $D\in[\OPT^p,(2\OPT)^p]$ and run the above algorithm to obtain a multiway cut $\mcp=(P_1,\ldots,P_k)$ such that 
\[\left(\sum_{i=1}^kw(\delta(P_i))^p\right)^{\frac{1}{p}}=O(\log^{1.5}n\log^{0.5}k)D^{\frac{1}{p}}=O(\log^{1.5}n\log^{0.5}k)\OPT.\]
This completes the proof of Theorem \ref{thm:approx-algo}.


\section{NP-hardness}
\label{sec:np-hardness}
In this section, we show NP-hardness results for \lpnormmwc thereby proving Theorem \ref{thm:np-hardness}. In Section \ref{sec:np-hard-constant-many-terminals}, we show that \lpnormmwc is NP-hard for $k=4$ terminals for every $p>1$ by a reduction from graph bisection. In Section \ref{sec:np-hard-planar-graphs}, we show that \lpnormmwc is NP-hard in planar graphs for every $p>1$ by a reduction from $3$-partition. 
In our reductions, we will frequently use the following two consequences of the Mean Value Theorem. We recall that the function $\mu(x) = x^p$ is differentiable. 
\begin{proposition}\label{prop:MVT}
For a differentiable function $\mu:\R\to\R$, and two real values $x\leq y$, we have
\[(y-x)\min_{z\in[x,y]}\mu'(z)\leq\mu(x)-\mu(y)\leq (y-x)\max_{z\in[x,y]}\mu'(z).\]

\end{proposition}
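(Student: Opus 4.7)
The plan is to apply the classical Mean Value Theorem in the most direct way. Since $\mu$ is differentiable on $\R$, it is differentiable on $[x,y]$, and MVT produces a point $c \in (x,y)$ with
\[\mu(y) - \mu(x) = \mu'(c)(y-x).\]
Once this single equation is in hand, everything else is bookkeeping. Because $c \in [x,y]$, I would observe that
\[\min_{z \in [x,y]} \mu'(z) \;\le\; \mu'(c) \;\le\; \max_{z \in [x,y]} \mu'(z),\]
and then multiply the chain by the nonnegative factor $(y-x)$, which preserves both inequalities and delivers the stated sandwich (up to the routine sign rearrangement that relates $\mu(y)-\mu(x)$ to $\mu(x)-\mu(y)$, which appears to be a typographical artifact in the stated form).

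I do not anticipate any genuine obstacle; the only technical detail worth flagging is the existence of the minimum and maximum of $\mu'$ over $[x,y]$. A merely differentiable function need not have a continuous derivative, so in full generality one should either read $\min$ and $\max$ as $\inf$ and $\sup$ or strengthen the hypothesis to $\mu \in C^1$. For the intended application in this paper, namely $\mu(t) = t^p$ with $p \ge 1$, the derivative $p t^{p-1}$ is continuous on every compact interval, so the extrema are attained and the statement applies verbatim, with no further work needed.
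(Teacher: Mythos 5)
Your proof is correct and matches the intended (though unstated in the paper) argument: a direct application of the classical Mean Value Theorem to produce $c\in(x,y)$ with $\mu(y)-\mu(x)=\mu'(c)(y-x)$, then bounding $\mu'(c)$ by the extrema of $\mu'$ on $[x,y]$ and multiplying by $(y-x)\geq 0$. You also rightly flag both the sign typo in the displayed inequality (the middle term should read $\mu(y)-\mu(x)$, consistent with how the proposition is applied later) and the subtlety that $\min$/$\max$ of $\mu'$ should in full generality be read as $\inf$/$\sup$, neither of which causes trouble for the paper's use with $\mu(z)=z^p$.
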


\begin{proposition}\label{prop:two mean values}
For $p\geq 1$ and real values $0<x_1\leq x_2\leq x_3\leq x_4$ such that $x_2+x_3=x_1+x_4$, we have $x_2^p+x_3^p\le x_1^p+x_4^p $.
\end{proposition}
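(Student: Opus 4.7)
The plan is to reduce Proposition \ref{prop:two mean values} to the monotonicity of the derivative of $\mu(x) := x^p$ on positive reals (namely, $\mu'(x) = p x^{p-1}$ is non-decreasing for $p \geq 1$ and $x > 0$) via the Mean Value Theorem packaged as Proposition \ref{prop:MVT}. The key observation is that the balanced-sum hypothesis $x_2 + x_3 = x_1 + x_4$ forces the two gaps $x_2 - x_1$ and $x_4 - x_3$ to be equal, so proving the claim amounts to showing that a fixed-width increment of $\mu$ is larger over the later interval $[x_3, x_4]$ than over the earlier interval $[x_1, x_2]$.

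First I would set $d := x_2 - x_1$ and note that $x_2 + x_3 = x_1 + x_4$ rearranges to $d = x_4 - x_3$, and that $d \geq 0$ because $x_1 \leq x_2$. The target inequality $x_2^p + x_3^p \leq x_1^p + x_4^p$ is equivalent to
\[
\mu(x_2) - \mu(x_1) \;\leq\; \mu(x_4) - \mu(x_3).
\]
Next I would apply Proposition \ref{prop:MVT} to $\mu$ on each of the intervals $[x_1, x_2]$ and $[x_3, x_4]$: the upper bound on the first interval gives $\mu(x_2) - \mu(x_1) \leq d \cdot \max_{z \in [x_1,x_2]} \mu'(z) = d \cdot p\, x_2^{p-1}$, and the lower bound on the second interval gives $\mu(x_4) - \mu(x_3) \geq d \cdot \min_{z \in [x_3,x_4]} \mu'(z) = d \cdot p\, x_3^{p-1}$.

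Finally, I would chain these two bounds together using the hypothesis $x_2 \leq x_3$ together with the fact that $z \mapsto p z^{p-1}$ is non-decreasing on $(0, \infty)$ when $p \geq 1$; this gives $p\, x_2^{p-1} \leq p\, x_3^{p-1}$, hence
\[
\mu(x_2) - \mu(x_1) \;\leq\; d \cdot p\, x_2^{p-1} \;\leq\; d \cdot p\, x_3^{p-1} \;\leq\; \mu(x_4) - \mu(x_3),
\]
which is exactly the desired inequality. There is no real obstacle here; the only mild point to track is that positivity of all $x_i$ is needed to make $z^{p-1}$ well-defined and monotone (it is supplied by $0 < x_1$), and that the argument degenerates cleanly when $d = 0$ (both sides vanish) or when $p = 1$ (both sides equal $d$, giving equality).
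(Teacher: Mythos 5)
Your proof is correct and is essentially identical to the paper's: both apply the upper bound from Proposition \ref{prop:MVT} on $[x_1,x_2]$, the lower bound on $[x_3,x_4]$, use the gap identity $x_2-x_1=x_4-x_3$, and bridge the two via monotonicity of $p z^{p-1}$ (i.e., $x_2^{p-1}\le x_3^{p-1}$). The paper just writes the same chain of inequalities starting from $x_4^p-x_3^p$ and working down, whereas you start from $x_2^p-x_1^p$ and work up.
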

\begin{proof}
We have
\begin{align*}
    x_4^p-x_3^p
    &\geq p(x_4-x_3)x_3^{p-1}
    = p(x_2-x_1)x_3^{p-1}
    \geq p(x_2-x_1)x_2^{p-1}
    \geq x_2^p-x_1^p.
\end{align*}
The first and last inequalities above are by Proposition \ref{prop:MVT}. 
\end{proof}

\subsection{NP-hardness for constant number of terminals} 
\label{sec:np-hard-constant-many-terminals}
The following is the main result of this section. 

\begin{theorem}\label{thm:NP const term}
\lpnormmwc is NP-hard for every $p\ge 1$ and $k\ge 4$. 
\end{theorem}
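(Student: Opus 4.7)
For $p=1$, \lpnormmwc coincides with \mwc, which is known to be NP-hard for $k\ge 3$ by \cite{DJPSY94}, so I focus on $p>1$. My plan is to reduce from the decision version of graph bisection (given $G=(V,E)$ with $|V|=2n$ and $c^*\in\Z_+$, decide whether there exists $A\subset V$ with $|A|=n$ and $|E(A,V\setminus A)|\le c^*$), which is a classical NP-hard problem.

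Given a bisection instance $(G,c^*)$, I would construct $G'$ with vertex set $V':=V\cup\{t_1,t_2,t_3,t_4\}$, retaining each edge of $E$ with unit weight and, for each $v\in V$, adding edges $(t_1,v),(t_2,v)$ of weight $W$ and $(t_3,v),(t_4,v)$ of weight $W_3$. I would choose $W_3=\Theta(c^*n^{p+1}/(p-1))$ and $W=\Theta(n^p W_3)$, both polynomial in $n,c^*$ for fixed $p>1$, and set the threshold $T^{*p}:=2(2nW+2nW_3+c^*)^p+2(2nW_3)^p$. A direct calculation shows, for any partition $P_i=\{t_i\}\cup A_i$ with $n_i:=|A_i|$ and $c_i:=|E(A_i,V\setminus A_i)|$, that $w(\delta(P_i))=2nW+2n_iW_3+c_i$ for $i\in\{1,2\}$ and $w(\delta(P_i))=2nW_3+2n_iW+c_i$ for $i\in\{3,4\}$. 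The forward direction is then immediate: any bisection $(A^*,B^*)$ of cost $\le c^*$ yields the multiway cut $(\{t_1\}\cup A^*,\{t_2\}\cup B^*,\{t_3\},\{t_4\})$ with $\ell_p$-norm$^p\le T^{*p}$.

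For the reverse direction, I would argue by cases that any multiway cut $\mcp$ with $\ell_p$-norm$^p\le T^{*p}$ must correspond to a bisection of cost $\le c^*$. In Case 1 ($n_3+n_4\ge 1$), WLOG $n_3\ge 1$ gives $w(\delta(P_3))\ge 2nW_3+2W$, so $w(\delta(P_3))^p\ge (2nW_3)^p+(2W)^p$ by the elementary inequality $(x+y)^p\ge x^p+y^p$ for $x,y\ge 0, p\ge 1$; combined with $w(\delta(P_i))^p\ge (2nW)^p$ for $i\in\{1,2\}$ and $w(\delta(P_4))^p\ge (2nW_3)^p$, and upper-bounding $(2nW+2nW_3+c^*)^p-(2nW)^p$ via Proposition \ref{prop:MVT}, the total exceeds $T^{*p}$ provided $W\gtrsim n^p W_3$. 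Case 2 ($n_3=n_4=0$, $n_1=n$, but $c_1>c^*$) is immediate. In Case 3 ($n_3=n_4=0$, $|n_1-n|\ge 1$), setting $u:=2nW+2nW_3+c_1$ and $\epsilon:=2(n_1-n)W_3$ gives $\{w(\delta(P_1)),w(\delta(P_2))\}=\{u\pm\epsilon\}$ with $|\epsilon|\ge 2W_3$; Proposition \ref{prop:two mean values} combined with a second-order application of Proposition \ref{prop:MVT} yields $(u+\epsilon)^p+(u-\epsilon)^p\ge 2u^p+\Omega(\epsilon^2 u^{p-2})$. This imbalance excess, of order $W_3^2 u^{p-2}$, must dominate the possible $c$-savings, upper-bounded by $2pc^*(u^*)^{p-1}$ (with $u^*:=2nW+2nW_3+c^*$) via Proposition \ref{prop:MVT}; the resulting condition $W_3^2\gtrsim c^*nW/(p-1)$ is exactly ensured by the parameter choice.

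The main obstacle is calibrating $W$ and $W_3$ so that both Case 1 ($W\gtrsim n^p W_3$, preventing $V$-vertex migration into $P_3$ or $P_4$) and Case 3 ($W_3^2\gtrsim c^* n W/(p-1)$, forcing balance via $\ell_p$-norm convexity) are satisfied simultaneously while keeping parameters polynomial in the input. The factor $1/(p-1)$ appearing in the Case 3 bound is precisely why this argument degenerates as $p\to 1$, consistent with $p=1$ being the sum-objective \mwc handled by the separate classical reduction of \cite{DJPSY94}.
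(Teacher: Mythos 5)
Your proposal is correct and follows the same strategy as the paper: a reduction from graph bisection to a $4$-terminal instance, analyzed by case splitting and the Mean-Value estimates of Propositions \ref{prop:MVT} and \ref{prop:two mean values}. The gadget is different in its details. The paper attaches uniform-weight stars from every $v\in V$ to all four terminals $u,d,\ell,r$ and adds a single heavy edge $ud$ of weight $b=1+\max\{1,(2an+C)^{p/(p-1)},3an\}$; the singleton-terminal parts $\{u\},\{d\}$ then carry a large fixed baseline, and any migration into them is ruled out by the large slope of $x^p$ at that baseline. Your gadget uses asymmetric star weights $W\gg W_3$ with no terminal-terminal edge, so a vertex migrating into $P_3$ or $P_4$ brings $2W$ of extra cut, which is prohibitively large against the low $2nW_3$ baseline. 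The roles are mirror-images --- heavy baseline on the singleton parts in the paper versus on the split parts in yours --- but both enforce the same structure, and your calibration $W=\Theta(n^p W_3)$, $W_3=\Theta(c^* n^{p+1}/(p-1))$ matches the two constraints you correctly identify ($W\gtrsim n^p W_3$ for Case 1 and $W_3^2\gtrsim c^* n W/(p-1)$ for Case 3). The $(p-1)^{-1}$ degeneration as $p\to 1$ is the same phenomenon as the paper's exponent $p/(p-1)$ in $b$ blowing up.

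One small caveat worth flagging for a full write-up: the quantitative estimate $(u+\epsilon)^p+(u-\epsilon)^p\ge 2u^p+\Omega(\epsilon^2 u^{p-2})$ invoked in Case 3 does not follow verbatim from Propositions \ref{prop:MVT} and \ref{prop:two mean values} as stated (the latter gives only the qualitative inequality $x_2^p+x_3^p\le x_1^p+x_4^p$). You should prove it directly, e.g.\ via
\[
(u+\epsilon)^p+(u-\epsilon)^p-2u^p=\int_0^\epsilon p\bigl[(u+t)^{p-1}-(u-t)^{p-1}\bigr]\,dt\ge p(p-1)\,\epsilon^2\cdot\min_{z\in[u-\epsilon,\,u+\epsilon]}z^{p-2},
\]
handling $1<p<2$ and $p\ge 2$ separately when bounding the minimum; this is routine since $\epsilon\ll u$ under your parameter choice, but it is an extra step beyond what the stated propositions provide.
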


\begin{proof}
We note that when $p=1$, \lpnormmc corresponds to \mwc and is known to be NP-hard for every $k\geq 3$ \cite{DJPSY94}. For the rest of our proof, we will fix $p>1$.

Our hardness reduction is from \textsc{bisection} which is known to be NP-complete. 
\textsc{bisection} is defined as follows:
Given an undirected graph $G=(V,E)$ where $|V|=:n$ is even and an integer $C$, the goal is to decide if there exists a subset $S\subseteq V$ such that $|S|=n/2$ and $|\delta_G(S)|\leq C$. 

Given an instance $(G=(V,E),C)$ of \textsc{bisection}, we construct an instance $(G', w', T)$ of \lpnormmc consisting of a graph $G'=(V',E')$, an edge weight function $w':E'\to\R_+$, and a set $T\subseteq V'$ of $4$ terminals vertices as follows:
\begin{align*}
    V'&:=V\cup\{u,d,\ell,r\},
    \\E'&:=E\cup\{ud\}\cup\{vu,vd,v\ell,vr:v\in V\},
    \\T&:=\{u,d,\ell,r\},
    \\w'(e)&:=\begin{cases}
    1& \text{ if }e\in E
    \\a& \text{ if } e\in \{vu,vd,v\ell,vr:v\in V\}
    \\b& \text{ if } e=ud
    \end{cases},
\end{align*}
where the parameters $a$ and $b$ are given by
\begin{align*}
    a:=\max\left\{1,\frac{8n^3}{p-1},2C+1\right\},\;b:=1+\max\left\{1,(2an+C)^{\frac{p}{p-1}},3an\right\}.
\end{align*}

We note that for every fixed $p>1$, the size of $(G',w',T)$ is polynomial in the size of $(G,C)$. 
The following lemma completes the proof of the theorem. 
\end{proof}

\begin{lemma}
There exists a subset $S\subseteq V$ such that $|S|=n/2$ and $|\delta_G(S)|\le C$ if and only if $(G', w', T)$ has a multiway cut whose $\ell_p$-norm objective value is at most 
\[\left(2(b+an)^p+2(2an+C)^p\right)^{\frac{1}{p}}.\]
\end{lemma}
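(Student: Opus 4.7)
The plan is to prove both directions. For the forward direction, given a bisection $S\subseteq V$ with $|S|=n/2$ and $|\delta_G(S)|\le C$, I would take the multiway cut $P_u=\{u\}$, $P_d=\{d\}$, $P_\ell=S\cup\{\ell\}$, $P_r=(V\setminus S)\cup\{r\}$. Direct edge-counting gives $w'(\delta(P_u))=w'(\delta(P_d))=b+an$ (the $b$ from the $ud$-edge and $an$ from the weight-$a$ edges from $u$ or $d$ to $V$) and $w'(\delta(P_\ell))=w'(\delta(P_r))=2an+|\delta_G(S)|\le 2an+C$, so the $\ell_p$-norm is at most $(2(b+an)^p+2(2an+C)^p)^{1/p}$.

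For the reverse direction, let $(P_u,P_d,P_\ell,P_r)$ be a multiway cut of $(G',w',T)$ with $\ell_p$-norm at most $(2(b+an)^p+2(2an+C)^p)^{1/p}$, and set $n_j:=|P_j\cap V|$ and $c_j:=|\delta_G(P_j\cap V)|$. An easy edge-count yields $w'(\delta(P_u))=b+(n+2n_u)a+c_u$ and analogously for the other parts (the $b$-term appears only for $u,d$ since only they are incident to the $ud$-edge). The plan has three steps. \emph{Step 1 (isolate $u,d$):} If $n_u\ge 1$, then $w'(\delta(P_u))\ge b+(n+2)a$ while $w'(\delta(P_d))\ge b+an$, so by Proposition~\ref{prop:MVT}, $w'(\delta(P_u))^p+w'(\delta(P_d))^p\ge 2(b+an)^p+2ap(b+an)^{p-1}$. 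Since $b>(2an+C)^{p/(p-1)}$ strictly (due to the ``$+1$'' in its definition), we have $b^{p-1}>(2an+C)^p$ and hence $2ap(b+an)^{p-1}>2(2an+C)^p$; this alone exceeds the available budget of $2(2an+C)^p$ for the $u,d$-parts, a contradiction, so $n_u=n_d=0$. \emph{Step 2 (balance $\ell,r$):} With $n_u=n_d=0$ we have $c_u=c_d=0$. Writing $t:=n_\ell-n/2$, $C'':=c_\ell=c_r=|\delta_G(P_\ell\cap V)|$, and $m:=2an+C''$, the bound reduces to $(m+2ta)^p+(m-2ta)^p\le 2(2an+C)^p$. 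If $C''>C$, Proposition~\ref{prop:two mean values} gives $(m+2ta)^p+(m-2ta)^p\ge 2m^p>2(2an+C)^p$, contradiction. For $C''\le C$, I plan to show $|t|\ge 1$ also leads to contradiction (see below). \emph{Step 3:} With $t=0$ and $C''\le C$, the set $S:=P_\ell\cap V$ has $|S|=n/2$ and $|\delta_G(S)|=C''\le C$, giving the bisection.

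The main obstacle is ruling out $|t|\ge 1$ in Step 2. A second-order Taylor estimate of $x\mapsto x^p$ (equivalently, iterating Proposition~\ref{prop:MVT}) gives a lower bound of $p(p-1)(2ta)^2\cdot\min\{(m-2ta)^{p-2},(m+2ta)^{p-2}\}$ for the convexity gap $(m+2ta)^p+(m-2ta)^p-2m^p$. Since $|t|\le n/2$ (forced by $n_r\ge 0$), we have $m-2ta\ge an$ and $m+2ta\le 4an$; consequently, the gap is at least $\Omega((p-1)t^2 a^p n^{p-2})$ uniformly in $p>1$ (the two regimes $p\ge 2$ and $1<p<2$ are handled by the two sides of the $\min$, depending on the sign of $p-2$). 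The slack $2(2an+C)^p-2m^p$ is at most $2pC(2an+C)^{p-1}=O(a^{p-1}n^{p+1})$ by Proposition~\ref{prop:MVT} together with $C\le n(n-1)/2<n^2/2$. The factor $8n^3/(p-1)$ in the definition of $a$ is engineered precisely to cancel the $(p-1)$ in the gap bound, so that the gap strictly exceeds the slack for all $|t|\ge 1$ and all $p>1$, forcing $t=0$ and completing the reverse direction.
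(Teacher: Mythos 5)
Your three-step outline for the reverse direction (isolate $u,d$; balance $\ell,r$; read off the bisection) matches the paper's Claims~\ref{claim:constant term-2} and~\ref{claim:constant term-3}, and your forward direction and Step~1 are correct and essentially identical to the paper's. Your case $C''>C$ via Proposition~\ref{prop:two mean values} is also fine. The gap is in Step~2, case $C''\le C$, $|t|\ge 1$.

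You lower-bound the convexity gap by $\Omega\bigl((p-1)t^2a^pn^{p-2}\bigr)$ with an absolute constant, using $m-2|t|a\ge an$ and $m+2|t|a\le 4an$; that part is right. But you then assert the slack $2pC(2an+C)^{p-1}$ is ``$O(a^{p-1}n^{p+1})$'' and that $a\ge 8n^3/(p-1)$ makes the gap exceed the slack. The $O(\cdot)$ in the slack bound hides a constant of order $p\cdot(2.5)^{p-1}$: indeed $(2an+C)^{p-1}\le(2.5an)^{p-1}$ (using $C<a/2\le an/2$), so the slack is at most $p\,(2.5)^{p-1}a^{p-1}n^{p+1}$. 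Carrying the constants through, with $a=8n^3/(p-1)$, $C\le n^2/2$, and $|t|=1$, your gap-lower-bound over slack-upper-bound is at most $32/(2.5)^{p-1}$, which is $<1$ once $p\ge 5$. So the claimed strict inequality is not established by your estimates. (The underlying inequality is still true; your bound is just too lossy.) The root cause is that the bound $m-2|t|a\ge an$ is tight only when $|t|\approx n/2$, while the slack lives near $2an+C$; the factor-$2$ mismatch, raised to the power $p-1$, is fatal for large $p$.

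The paper sidesteps this entirely. It first invokes Proposition~\ref{prop:two mean values} to reduce to $|L|=n/2+1$, $|R|=n/2-1$ (i.e.\ $|t|=1$), so that all relevant quantities sit near $2an$, and then applies Proposition~\ref{prop:MVT} four times in a telescoping decomposition with intermediate points $2an$, $2an+C$, $2an+a$, $2an+2a$ — all within a $(1+O(1/n))$ factor of one another — so the exponential-in-$p$ loss reduces to $(1+1/(2n))^{p-1}$, which is precisely what $\epsilon=(p-1)/(8n)$ is chosen to beat. Your approach could be repaired either by first reducing to $|t|=1$ as the paper does, or by replacing $m-2|t|a\ge an$ with the $t$-dependent bound $m-2|t|a\ge 2a(n-|t|)$ and tracking the $|t|$-dependence carefully, but as written the argument does not close.
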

\begin{proof}
We start by showing the forward direction.

\begin{claim}\label{claim:constant term-1}
If there exists a subset $S\subseteq V$ such that $|S|=n/2$ and $|\delta_G(S)|\le C$, then  $(G', w', T)$ has a multiway cut whose $\ell_p$-norm objective value is at most $\left(2(b+an)^p+2(2an+C)^p\right)^{1/p}$.
\end{claim}
\begin{proof}
Let $S\subseteq V$ satisfy $|S|=n/2$ and $|\delta_G(S)|\leq C$. Then the $\ell_p$-norm objective value of the multiway cut $(\{u\},\{d\},S\cup\{\ell\},(V\backslash S)\cup\{r\})$ raised to the $p$th power is
\begin{align*}
    (b+an)^p+(b+an)^p+(3a|S|+a|V\backslash S|+C)^p&+(3a|V\backslash S|+a|S|+C)^p
    \\&=2(b+an)^p+2(2an+C)^p.
\end{align*}
\end{proof}

In order to show the reverse direction, we need the following structural result on multiway cuts of $(G',w', T)$ with cheap $\ell_p$-norm objective value. 

\begin{claim}\label{claim:constant term-2}
If $G'$ has a multiway cut ${\mcp}$ whose $\ell_p$-norm objective value is at most $\left(2(b+an)^p+2(2an+C)^p\right)^{1/p}$, then the parts of ${\mcp}$ containing $u$ and $d$ are singletons.
\end{claim}
\begin{proof}
Let $\mcp=(U\cup\{u\},D\cup\{d\},L\cup\{\ell\},R\cup\{r\})$ be a multiway cut of $G'$ whose $\ell_p$-norm objective value raised to $p$th power is at most $2(b+an)^p+2(2an+C)^p$, where $U\cup D\cup L\cup R=V$. 
Without loss of generality, suppose that $U$ is non-empty. 
Then we have
\begin{align*}
    &w'(\delta(U\cup\{u\}))\geq b+3a|U|+a(n-|U|)=b+2a|U|+an\ge b+2a+an,
    \\&w'(\delta(D\cup\{d\}))\geq b+3a|D|+a(n-|D|)=b+2a|D|+an\ge b+an.
\end{align*}
This implies that the $\ell_p$-norm objective value of $\mcp$ raised to the $p$th power is at least $(b+2a+an)^p+(b+an)^p$. 
By assumption, the $\ell_p$-norm objective value of $\mcp$ raised to the $p$th power is at most $2(b+an)^p+2(2an+C)^p$. Thus, 
\begin{align*}
    0
    &\geq (b+2a+an)^p+(b+an)^p-(2(b+an)^p+2(2an+C)^p)
    \\&=(b+2a+an)^p-(b+an)^p-2(2an+C)^p
\end{align*}
Setting $\mu(z)=z^p$, $x=b+an$ and $y=b+2a+an$ in Proposition \ref{prop:MVT}, we observe that
\begin{align*}
    (b+2a+an)^p-(b+an)^p\geq 2a\cdot \min_{z\in[b+an,b+2a+an]}pz^{p-1}=2ap(b+an)^{p-1}.
\end{align*}
Therefore,
\begin{align*}
    0&\geq (b+2a+an)^p-(b+an)^p-2(2an+C)^p
    \\&\geq 2ap(b+an)^{p-1}-2(2an+C)^p 
    \\&>2ap\left((2an+C)^{\frac{p}{p-1}}+an\right)^{p-1}-2(2an+C)^p
    \\&\geq 0.
\end{align*}
Here the strict inequality follows from our choice of $b>(2an+C)^{p/(p-1)}$ and the last inequality follows from $a\ge 1$ and $p>1$. 
This is a contradiction since one of the inequalities in the above sequence is strict. Hence, we must have $U=D=\emptyset$.
\end{proof}

The following claim proves the reverse direction of the lemma by showing that a multiway cut of $(G', w', T)$ that is cheap in $\ell_p$-norm objective value can be used to recover a cheap bisection.
\end{proof}

\begin{claim}\label{claim:constant term-3}
If a multiway cut $\mcp=(U\cup\{u\},D\cup\{d\},L\cup\{\ell\},R\cup\{r\})$ of $(G', w', T)$ has $\ell_p$-norm objective value at most $\left(2(b+an)^p+2(2an+C)^p\right)^{1/p}$, then $L\cup R=V$, $|L|=|R|=n/2$, and $|\delta_G(L)|\leq C$.
\end{claim}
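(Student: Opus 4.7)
The plan is to combine Claim \ref{claim:constant term-2} with the hypothesized upper bound on the $\ell_p$-norm objective. Since that claim already gives $U = D = \emptyset$, we immediately have $L \cup R = V$. Writing $c := |\delta_G(L)| = |\delta_G(R)|$ and reading off the cut values in $G'$, we get
\begin{align*}
w'(\delta(\{u\})) = w'(\delta(\{d\})) &= b + an, \\
w'(\delta(L \cup \{\ell\})) &= 2a|L| + an + c, \\
w'(\delta(R \cup \{r\})) &= 2a|R| + an + c.
\end{align*}
After cancelling the $u$- and $d$-contributions on both sides of the hypothesis, the task reduces to deducing $c \le C$ and $|L| = |R| = n/2$ from
\[
(2a|L| + an + c)^p + (2a|R| + an + c)^p \le 2(2an + C)^p.
\]
Setting $A := 2an + c$ and $h := a\bigl||L| - |R|\bigr|$, the left-hand side equals $(A - h)^p + (A + h)^p$.

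The first step is to extract $c \le C$. Applying Proposition \ref{prop:two mean values} with $x_1 = A - h$, $x_2 = x_3 = A$, $x_4 = A + h$ (so $x_1 + x_4 = x_2 + x_3 = 2A$ and the ordering $x_1 \le x_2 = x_3 \le x_4$ holds since $h \ge 0$), one obtains $2A^p \le (A - h)^p + (A + h)^p \le 2(2an + C)^p$, hence $A \le 2an + C$, i.e., $c \le C$.

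The second step, forcing $|L| = |R| = n/2$, is more delicate. Since $n$ is even and $|L| + |R| = n$, the difference $|R| - |L|$ is even, so if $|L| \neq |R|$ then $h \ge 2a$. Applying Proposition \ref{prop:two mean values} once more with $x_1 = A - h$, $x_2 = A - 2a$, $x_3 = A + 2a$, $x_4 = A + h$ squeezes the spread from $h$ down to $2a$:
\[
(A - 2a)^p + (A + 2a)^p \le (A - h)^p + (A + h)^p \le 2(2an + C)^p.
\]
To contradict this, the plan is to lower bound the second-order difference by writing
\[
(A + 2a)^p + (A - 2a)^p - 2A^p = p \int_0^{2a} \bigl[(A + s)^{p-1} - (A - s)^{p-1}\bigr]\, ds
\]
and applying Proposition \ref{prop:MVT} to $z \mapsto z^{p-1}$ on $[A - s, A + s]$ so that each integrand is at least $2s(p-1)\xi^{p-2}$ for an appropriate $\xi$ in that interval; after integrating, this yields a bound of the shape $(2a)^2 p(p-1) A^{p-2}$ up to an absolute constant depending on $p$ (the cases $p \ge 2$ and $1 < p < 2$ being handled separately via the monotonicity of $z \mapsto z^{p-2}$ and the observation that $A + 2a = O(A)$). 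On the other side, Proposition \ref{prop:MVT} bounds the slack $2(2an + C)^p - 2A^p \le 2p(C - c)(2an + C)^{p-1}$.

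The main obstacle is the parameter matching in the resulting comparison. Using $A \ge 2an$ and the tight estimate $(2an + C)^{p-1} = O_p\bigl((2an)^{p-1}\bigr)$, which is valid because the choice $a \ge 8n^3/(p-1)$ makes $C/(an)$ vanishingly small for large $n$, the sought contradiction reduces to requiring $(p-1) a = \Omega_p(Cn)$. Since $a \ge 8n^3/(p-1)$ and $C \le \binom{n}{2}$, we have $(p-1)a \ge 8n^3 \ge 8Cn$, so the second-order term strictly dominates the slack, producing the desired contradiction. Hence $|L| = |R| = n/2$, completing all three conclusions of the claim.
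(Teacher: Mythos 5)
Your proof is correct and shares the paper's high-level structure: invoke Claim \ref{claim:constant term-2} to get $U=D=\emptyset$, use Proposition \ref{prop:two mean values} to reduce the imbalance $||L|-|R||$ to the extremal case, and contradict via Mean Value Theorem estimates powered by the choice of $a$. The differences are organizational but worth recording. You retain the term $c=|\delta_G(L)|$ inside the cut values and recenter the left-hand side as $(A-h)^p+(A+h)^p$ with $A=2an+c$; this lets you extract $c\le C$ at the very start by a single application of Proposition \ref{prop:two mean values} with $x_2=x_3=A$, whereas the paper drops $c$ from its lower bounds and only recovers $|\delta_G(L)|\le C$ at the end, after $|L|=|R|=n/2$ has been pinned down. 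For the delicate step, you split the target inequality into a second-order difference $(A+2a)^p+(A-2a)^p-2A^p$ and a slack $2(2an+C)^p-2A^p$, estimating the former via the integral $p\int_0^{2a}\bigl[(A+s)^{p-1}-(A-s)^{p-1}\bigr]\,ds$; the paper instead telescopes the analogous quantity into four first-order differences, applies Proposition \ref{prop:MVT} four times, and reduces to the clean scalar inequality $\epsilon>(1+\tfrac{1}{2n})^{p-1}-1$ with $\epsilon=(p-1)/(8n)$. Your integral route makes the curvature intuition (the left side wins by roughly $(2a)^2p(p-1)A^{p-2}$) more transparent, but the constants hidden in $(1\pm 1/n)^{p-2}$ and $(2an+C)^{p-1}/(2an)^{p-1}$ are $p$-dependent, so the closing comparison $(p-1)a\ge 8n^3\ge 8Cn$ only finishes the job once $n$ is large relative to $p$; the paper's four-term telescoping is fussier but tracks constants exactly and is valid for every $n$. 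Since NP-hardness only requires the reduction to be correct for all sufficiently large instances, both arguments are sound.
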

\begin{proof}
By Claim \ref{claim:constant term-2}, we know that $U=D=\emptyset$, and hence $L\cup R=V$. We note that in this case, we have
\begin{align*}
    &w'(\delta(U\cup\{u\}))=w'(\delta(D\cup\{d\}))=b+an,
    \\&w'(\delta(L\cup\{\ell\}))\geq 3a|L|+a(n-|L|)=2a|L|+an,
    \\&w'(\delta(R\cup\{r\}))\geq 3a|R|+a(n-|R|)=2a|R|+an.
\end{align*}
This implies that the $\ell_p$-norm objective value of $\mcp$ raised to the $p$th power is at least $2(b+an)^p+(2a|L|+an)^p + (2a|R|+an)^p$. By assumption, the $\ell_p$-norm objective value of $\mcp$ raised to the $p$th power at most $2(b+an)^p+2(2an+C)^p$. Hence, 
\begin{align}
    0&\geq 2(b+an)^p+(2a|L|+an)^p+(2a|R|+an)^p-(2(b+an)^p+2(2an+C)^p) \notag
    \\&=(2a|L|+an)^p+(2a|R|+an)^p-2(2an+C)^p. \label{eqn:hc-1}
\end{align}
For the sake of contradiction, suppose that $|L|\neq n/2$. Without loss of generality, let $|L|\geq n/2+1$ and $|R|\leq n/2-1$. We note that $|L|+|R|=n=(n/2-1)+(n/2+1)$. 

In Proposition \ref{prop:two mean values}, by setting
\begin{align*}
    x_1=2a|R|+an,\;x_2=2a\left(\frac{n}{2}-1\right)+an,\;x_3=2a\left(\frac{n}{2}+1\right)+an,\;x_4=2a|L|+an,
\end{align*}
we get
\begin{align*}
    &(2a|L|+an)^p+(2a|R|+an)^p\geq\left(2a\left(\frac{n}{2}+1\right)+an\right)^p+\left(2a\left(\frac{n}{2}-1\right)+an\right)^p.
\end{align*}
Substituting this in inequality \eqref{eqn:hc-1}, we get that 
\begin{align}
    0&\geq(2a|L|+an)^p+(2a|R|+an)^p-2(2an+C)^p \notag
    \\&\geq \left(2a\left(\frac{n}{2}+1\right)+an\right)^p+\left(2a\left(\frac{n}{2}-1\right)+an\right)^p-2(2an+C)^p \notag
    \\&=(2an+2a)^p+(2an-2a)^p-2(2an+C)^p \notag
    \\&=((2an+2a)^p-(2an+a)^p)+((2an+a)^p-(2an+C)^p) \notag
    \\&\qquad-((2an+C)^p-(2an)^p)-((2an)^p-(2an-2a)^p) \label{eqn:hc-2}
\end{align}
By applying Proposition \ref{prop:MVT} four times, we get that 
\begin{align*}
    &(2an+2a)^p-(2an+a)^p\geq ap(2an+a)^{p-1},
    \\&(2an+a)^p-(2an+C)^p\geq (a-C)p(2an+C)^{p-1},
    \\&(2an+C)^p-(2an)^p\leq Cp(2an+C)^{p-1},
    \\&(2an)^p-(2an-2a)^p\leq 2ap(2an)^{p-1}.
\end{align*}
Substituting these in inequality \eqref{eqn:hc-2}, we get that
\begin{align*}
    0&\geq ap(2an+a)^{p-1}+(a-C)p(2an+C)^{p-1}-Cp(2an+C)^{p-1}-2ap(2an)^{p-1}
    \\&=ap(2an+a)^{p-1}+(a-2C)p(2an+C)^{p-1}-2ap(2an)^{p-1}
    \\&\geq ap(2an+a)^{p-1}+(a-2C)p(2an)^{p-1}-2ap(2an)^{p-1}.
\end{align*}
Let $\epsilon:=(p-1)/(8n)$. Since $a\geq 8n^3/(p-1)$, we have that $2C<n^2\leq \epsilon a$. This implies
\begin{align*}
    0&\geq ap(2an+a)^{p-1}+(a-2C)p(2an)^{p-1}-2ap(2an)^{p-1}
    \\&> ap(2an+a)^{p-1}+(1-\epsilon)ap(2an)^{p-1}-2ap(2an)^{p-1}.
\end{align*}
This inequality is equivalent to
\begin{align*}
    0&>(2n+1)^{p-1}+(1-\epsilon)(2n)^{p-1}-2(2n)^{p-1}=(2n+1)^{p-1}-(1+\epsilon)(2n)^{p-1},
\end{align*}
which further implies
\begin{align*}
    \epsilon&>\left(1+\frac{1}{2n}\right)^{p-1}-1.
\end{align*}
Applying Proposition \ref{prop:MVT} again, we get
\begin{align*}
    \epsilon&>\left(1+\frac{1}{2n}\right)^{p-1}-1
    \geq\frac{1}{2n}\cdot \min_{z\in[1,1+\frac{1}{2n}]}(p-1)z^{p-2}=\frac{p-1}{2n}\min_{z\in[1,1+\frac{1}{2n}]}z^{p-2}.
\end{align*}
If $p\geq 2$, the minimum of $z^{p-2}$ for $z\in[1,1+1/(2n)]$ is attained at $z=1$, and thus $z^{p-2}\geq 1$ for all  $z\in[1,1+1/(2n)]$. If $p\in(1,2)$, the minimum of $z^{p-2}$ for $z\in[1,1+1/(2n)]$ is attained at $z=1+1/(2n)$, and thus $z^{p-2}\geq (1+1/(2n))^{p-2}>2^{p-2}>1/2$ for all  $z\in[1,1+1/(2n)]$. Hence, 
\[ \epsilon>\frac{p-1}{2n}\cdot \min_{z\in[1,1+\frac{1}{2n}]}z^{p-2}> \frac{p-1}{2n}\cdot\frac{1}{2}>\frac{p-1}{8n}=\epsilon.\]
This leads to a contradiction since one of the inequalities in the above sequence is strict. Hence, we must have $|L|=|R|=n/2$. Finally, we prove the last conclusion that $|\delta_G(L)|\le C$: since $|L|=|R|=n/2$, the $\ell_p$-norm objective value of $\mcp$ raised to $p$th power is 
\[2(b+an)^p+2(2an+|\delta_G(L)|)^p\]
which is known to be at most $2(b+an)^p+2(2an+C)^p$. Hence, $|\delta_G(L)|\leq C$ as claimed.
\end{proof}


\subsection{NP-hardness in planar graphs}\label{sec:np-hard-planar-graphs}
The following is the main result of this section. 

\begin{theorem}\label{thm:np-hard-planar}
\lpnormmwc in planar graphs is NP-hard for every $p\ge 1$. 
\end{theorem}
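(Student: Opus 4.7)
The plan is to reduce from the strongly NP-complete \textsc{3-partition} problem, mimicking the template of Svitkina--Tardos who reduced from \textsc{3-partition} to \mmmwc in trees. An instance of \textsc{3-partition} consists of $3m$ positive integers $a_1,\ldots,a_{3m}$ with $\sum_i a_i=mB$ and $B/4<a_i<B/2$, and the question is whether $[3m]$ admits a partition into $m$ triples each summing to exactly $B$. The target planar \lpnormmwc instance will use $\Theta(m)$ terminals, so $k$ is not constant; the starting problem being strongly NP-hard will let us keep $B$ polynomial in $m$, which in turn keeps all edge weights polynomial.

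The construction I propose is a planar ``comb-like'' gadget. Place $m$ bucket terminals $t_1,\ldots,t_m$ on a planar backbone, together with a small constant number of auxiliary anchor terminals (say, $u$ on one side and $d$ on the other of the backbone, mirroring the role of $u,d$ from Section \ref{sec:np-hard-constant-many-terminals}). For each element $i\in[3m]$, attach a non-terminal element vertex $v_i$ to the backbone by edges whose total weight is $M\cdot a_i$, where $M$ is a large scaling parameter; the element vertices sit on one side of the backbone and the anchors on the other, which preserves planarity. The weights on the backbone and anchor edges will be chosen as a hierarchy $1\ll M\cdot a_i \ll W_0\ll W_1$, analogously to the parameters $a,b$ in the proof of Theorem \ref{thm:NP const term}. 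This hierarchy is designed to force, in any multiway cut of $\ell_p$-norm below a carefully chosen threshold $\tau$, that (i) the anchor parts are singletons, (ii) each element vertex $v_i$ lies in the same part as exactly one bucket terminal, and (iii) the cut value of the $j$-th bucket part has the form $W_0+M\cdot s_j$, where $s_j:=\sum_{i\text{ assigned to }t_j}a_i$.

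Granting (i)--(iii), the $p$-th power of the $\ell_p$-norm of the cut values equals $C_{\text{anchor}}+\sum_{j=1}^m(W_0+M\cdot s_j)^p$, where $C_{\text{anchor}}$ is a fixed constant and $\sum_j s_j=mB$ is also fixed. Strict convexity of $x\mapsto x^p$ for $p>1$ then implies, via an iterated application of Proposition \ref{prop:two mean values} to any pair of indices $j,j'$ with $s_j<s_{j'}$, that the minimum is attained exactly when $s_1=\cdots=s_m=B$; since the $a_i$ are integers, this is equivalent to the existence of a valid $3$-partition. Concretely, I set the threshold $\tau$ so that $\tau^p=C_{\text{anchor}}+m(W_0+MB)^p$. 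A YES instance then produces a multiway cut of $\ell_p$-norm at most $\tau$; for a NO instance some $s_j$ must differ from $B$ by at least $1$, and Proposition \ref{prop:MVT} applied to $\mu(x)=x^p$ on intervals near $W_0+MB$ gives a lower bound on the excess of the form $\Omega(p\cdot M\cdot (W_0+MB)^{p-1})$, which we arrange to exceed any slack in the upper-bound argument by taking $W_1$ much larger than $M$ and $W_0$ much larger than $MB$.

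The main obstacle will be designing the planar element gadget and the anchor structure so that properties (i)--(iii) truly hold with equality in (iii) (no matter which bucket terminal absorbs $v_i$) while remaining planar; a single direct edge from $v_i$ to the backbone would not be enough to pin the assignment down, so I anticipate needing a small planar sub-gadget per element (for instance, connecting $v_i$ to several nearby backbone vertices and to the two anchor terminals with carefully scaled weights) so that deviating from the intended assignment always incurs a cost $\gg M\cdot a_i$. Once that combinatorial step is in place, the remaining analysis is a more elaborate version of Claims \ref{claim:constant term-1}--\ref{claim:constant term-3}, with the graph bisection bookkeeping replaced by the triple-packing bookkeeping of \textsc{3-partition}, and with Propositions \ref{prop:MVT} and \ref{prop:two mean values} doing exactly the same work as in Section \ref{sec:np-hard-constant-many-terminals}.
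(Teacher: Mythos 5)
Your starting point (reduction from strongly NP-hard \textsc{3-partition}) and your analytic toolkit (Propositions \ref{prop:MVT} and \ref{prop:two mean values}, plus strict convexity of $x^p$) match the paper exactly, and your high-level bookkeeping in the third paragraph---force bucket sums $s_1,\ldots,s_m$ to all equal $B$ via convexity---is precisely how the argument concludes. But the construction itself is where the proof lives, and that is where your proposal has a gap: you explicitly flag ``the main obstacle'' as designing a planar element gadget that pins each $v_i$ to exactly one bucket terminal, and you do not resolve it. Your sketch of a comb backbone with two \emph{global} anchor terminals $u,d$ on opposite sides of the backbone, with each $v_i$ wired to ``several nearby backbone vertices and to the two anchor terminals,'' is not obviously planar once every $v_i$ needs access to both anchors across the backbone, and the extra $W_0$ offset you build into the bucket cut value is a symptom of a design that is more entangled than it needs to be.

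The paper's construction avoids all of this by making the anchors \emph{local to each element} rather than global. For each $i\in[3m]$ it creates a disjoint $K_4$-shaped gadget: three terminals $x_i^1,x_i^2,x_i^3$ joined pairwise by very heavy edges (weight $dB$ with $d=(12m+12)^{1/(p-1)}$) and a single non-terminal $v_i$ joined to each of the three by a light edge of weight $a_i/3$. Separately, $m$ \emph{isolated} bucket terminals $t_1,\ldots,t_m$ are added. Planarity is then automatic (a disjoint union of $K_4$'s and isolated vertices). The heavy triangle makes it prohibitively expensive for $v_i$ to join any $X_{i'}^r$ part, so in any cheap multiway cut each $x_i^r$ is a singleton with cut value exactly $2dB$ and each $v_i$ is absorbed by some $t_j$, contributing exactly $a_i$ to that bucket's cut value with no offset. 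From there Jensen's inequality (or repeated application of Proposition \ref{prop:two mean values}, as you anticipated) forces $\sum_{i\in S_j} a_i = B$ for all $j$. So while your route is the right route, the genuinely creative step---the disjoint per-element triangle gadget that simultaneously guarantees planarity and pins down the assignment---is missing from your sketch and is not a ``more elaborate version'' of Claims \ref{claim:constant term-1}--\ref{claim:constant term-3}; it is a different gadget whose correctness claim (Claim \ref{claim:np-hardness-planar-1}) requires a fresh counting argument over the three families $\mcx_1,\mcx_2,\mcx_3$ of $X_i^r$-parts.
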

\begin{proof}
We note that when $p=1$, \lpnormmwc corrresponds to \mwc and is known to be NP-hard in planar graphs \cite{DJPSY94}. For the rest of our proof, we will fix $p>1$. 

Our hardness reduction is 
from \textsc{3-partition} which is known to be NP-hard. \textsc{3-partition} is defined as follows: 
Given a set $S=[3m]$, a sequence of weights $a_1,a_2,\ldots,a_{3m}$, and a bound $B$ satisfying $\sum_{i=1}^{3m}a_i=mB$ and $B/4<a_i<B/2$ for all $i\in[3m]$, the goal is to decide whether there exists a partition of $S$ into $m$ subsets $S_1,S_2,\ldots,S_m$ such that $\sum_{i\in S_j}a_i=B$ for every $j\in[m]$. 

Given an instance of \textsc{3-partition} by a set $S=[3m]$, weights $a_1,a_2,\ldots,a_{3m}$, and bound $B$, we construct an instance $(G,w,T)$ of \lpnormmc as follows: we start with an empty graph $G$, and for each $i\in[3m]$, we add to $G$ a subgraph as shown in Figure \ref{fig:hardness-1}. The edge weights are labelled near the corresponding edges, where $d:=(12m+12)^{\frac{1}{p-1}}$. These $3m$ subgraphs are disjoint from each other. Finally, we add $m$ isolated vertices $t_1,\ldots,t_m$ to $G$. The terminal set $T$ is given by $\{x_i^r:i\in[3m],r\in[3]\}\cup\{t_1,\ldots,t_m\}$. We observe that the graph $G$ constructed this way is planar.
\begin{figure}
    \centering
    \includegraphics[width=0.6\textwidth]{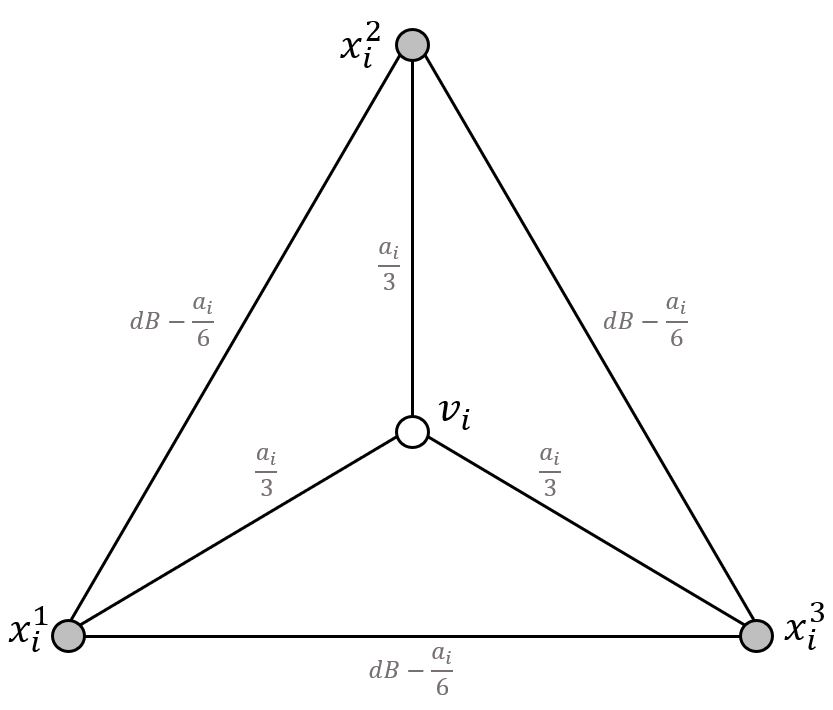}
    \caption{The $i$th subgraph in reduction from \textsc{3-partition}.}
    \label{fig:hardness-1}
\end{figure}
We note that the size of $(G,w,T)$ is polynomial in the size of the \textsc{3-partition} instance. We emphasize that the number of terminals in this reduction is not a constant. 
The following lemma completes the proof of Theorem \ref{thm:np-hard-planar}.
\end{proof}

\begin{lemma}\label{lemma:NP planar}
There exists a partition of $S$ into $S_1,\ldots,S_m$ such that $\sum_{i\in S_j}a_i=B$ for each $j\in[m]$ if and only if $(G, w, T)$ has a multiway cut whose $\ell_p$-norm objective value is at most 
\[
(9m(2dB)^p+mB^p)^{\frac{1}{p}}.
\]
\end{lemma}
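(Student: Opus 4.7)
My proof plan has two directions.

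\textbf{Forward direction.} Given a valid 3-partition $S_1,\ldots,S_m$ of $[3m]$, I would construct a multiway cut $\mcp$ of $G$ by (i) making the part containing each terminal $x_i^r$ consist of $x_i^r$ together with its ``private'' non-terminal neighbours inside gadget $i$ so that the cut value of this part is exactly $2dB$, and (ii) for each $j\in[m]$, taking the part containing $t_j$ to be $\{t_j\}$ together with the ``item vertex'' (and any other internal vertices not already claimed) of each gadget $i$ with $i\in S_j$, so that the only edges leaving this part are the weight-$a_i$ edges of those gadgets. The cut value of the $t_j$-part is then $\sum_{i\in S_j}a_i=B$. Summing $p$-th powers over the $9m$ terminals $x_i^r$ and the $m$ terminals $t_j$, the $p$-th power of the $\ell_p$-norm objective of $\mcp$ equals $9m(2dB)^p+mB^p$, as required.

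\textbf{Reverse direction: structural step.} Let $\mcp$ be a multiway cut whose $\ell_p$-norm objective raised to the $p$-th power is at most $9m(2dB)^p+mB^p$. The main structural claim to prove is that every gadget is ``correctly'' cut, meaning: for each $i\in[3m]$ and each $r\in[3]$, the part $P_{x_i^r}$ containing $x_i^r$ has cut value exactly $2dB$ and lies entirely inside gadget $i$, while the item vertex of gadget $i$ lies in the part of some $t_j$. I would prove this by a contradiction argument analogous to Claim~\ref{claim:constant term-2}: if some gadget is cut incorrectly (for example, an item vertex merged with an $x_i^r$-part, or $x_i^r$-parts merged across gadgets, or $x_i^r$ merged with $t_j$), then the cut value of at least one part in that gadget increases by at least $d$ additively. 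Using Proposition~\ref{prop:MVT} with the choice $d=(12m+12)^{1/(p-1)}$, the $p$-th power of that part grows by a term of order at least $dp(2dB)^{p-1}\ge (12m+12)(2dB)^p\cdot (\text{const})$, which will exceed the total slack of the inequality and yield a contradiction. The bookkeeping here is the main obstacle: I need to quantify exactly how much slack the budget $9m(2dB)^p+mB^p$ allows, and verify that any single incorrect merging strictly exceeds it thanks to the superpolynomial-in-$m$ choice of $d$.

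\textbf{Reverse direction: extracting the partition.} Once the structural step is established, define $S_j:=\{i\in[3m]: \text{the item vertex of gadget }i\text{ lies in the }t_j\text{-part}\}$. Then $S_1,\ldots,S_m$ partition $[3m]$, and the cut value of the $t_j$-part is at least $b_j:=\sum_{i\in S_j}a_i$. Because the $9m$ contributions of the $x_i^r$-parts already account for exactly $9m(2dB)^p$ in the budget, the remaining budget forces $\sum_{j=1}^m b_j^p\le mB^p$, while $\sum_{j=1}^m b_j=\sum_{i=1}^{3m}a_i=mB$. By Jensen's inequality (Lemma~\ref{lem:jensen}) applied to the convex function $x\mapsto x^p$, the minimum of $\sum_j b_j^p$ subject to $\sum_j b_j=mB$ with $b_j\ge 0$ equals $mB^p$ and is attained only when $b_j=B$ for all $j$. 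Hence $b_j=B$ for every $j\in[m]$, which gives the desired 3-partition.
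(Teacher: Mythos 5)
Your architecture matches the paper's: a forward construction, a structural claim that every $v_i$ lands in some $t_j$-part, and a Jensen extraction. The forward direction and extraction are essentially right (the $x_i^r$-parts in the paper's construction are simply the singletons $\{x_i^r\}$). The issue is the structural step, where your quantitative claim fails. If $v_{i'}$ joins the part $X_i^r$ of terminal $x_i^r$, the additive increase to $w(\delta(X_i^r))$ is at least $a_{i'}/3\geq B/12$ and at most $a_{i'}<B/2$---bounded by $B$, not by $d$. The amplification enters only afterwards, through the Mean Value Theorem and the choice $d^{p-1}=12m+12$: the increase in the $p$-th power is at least $\frac{B}{12}\cdot p(2dB)^{p-1}=p\,2^{p-1}(m+1)B^p$, which exceeds the slack $mB^p$ in the budget. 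Your stated inequality $dp(2dB)^{p-1}\ge(12m+12)(2dB)^p\cdot(\text{const})$ mismatches the powers of $B$ and fails once $B$ is large. Also, two of your listed ``incorrect merges'' ($x_i^r$ with another $x$-terminal, or $x_i^r$ with $t_j$) cannot occur since those are all terminals; the only freedom in the multiway cut is where each $v_i$ goes.

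Two further pieces of bookkeeping are needed to close the contradiction, and your plan omits both. First, a single $X_i^r$-part may absorb several $v_{i'}$'s; the paper handles this by iterating Proposition~\ref{prop:two mean values} to redistribute the overload, yielding a lower bound as if each extra $v_{i'}$ sat alone in its own $x$-part. Second, moving $v_{i'}$ into an $x$-part also removes its contribution from $\sum_j w(\delta(T_j))^p$; the paper's lower bound $\sum_j w(\delta(T_j))^p\geq(3m-|I|)(B/4)^p$ accounts for exactly this loss. Combining, the per-$v_{i'}$ net term is $\left(\left(2d+\frac{1}{12}\right)^p-(2d)^p-\left(\frac{1}{4}\right)^p\right)B^p$, which by MVT is still at least roughly $p\,2^{p-1}(m+1)B^p$ and strictly exceeds the available slack, giving the contradiction. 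Without both pieces the structural step does not close.
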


\begin{proof}
We start by showing the forward direction.

\begin{claim}
If $S$ can be partitioned into $S_1,\ldots,S_m$ such that $\sum_{i\in S_j}a_i=B$ for each $j\in[m]$, then $G$ has a multiway cut whose $\ell_p$-norm objective value is $(9m(2dB)^p+mB^p)^{1/p}$.
\end{claim}

\begin{proof}
Consider the multiway cut of $G$ defined by
\[\big\{\{x_i^r\}:i\in[3m],r\in[3]\big\}\cup\big\{\{t_j\}\cup\{v_i:i\in S_j\}:j\in[m]\big\}.\]
The $p$th power of the $\ell_p$-norm objective value of this multiway cut is
\[3m\cdot 3\cdot(2dB)^p+\sum_{j\in[m]}\left(\sum_{i\in S_j}\frac{a_i}{3}\cdot 3\right)^p=9m(2dB)^p+\sum_{j\in[m]} B^p=9m(2dB)^p+mB^p.\]
\end{proof}

For the backward direction, we will start with a structural property regarding multiway cuts of $(G, w, T)$ with cheap $\ell_p$-norm objective value: each non-terminal vertex $v_i$ will not be contained in a part that contains any of the $x_i^r$ terminals. 

\begin{claim}\label{claim:np-hardness-planar-1}
Let $\mcp$ be a multiway cut in $(G, w, T)$ with $\ell_p$-norm objective value at most $(9m(2dB)^p+mB^p)^{1/p}$. Then, for every $i\in [3m]$, the vertex $v_i$ will be in a part of $\mcp$ that contains terminal $t_j$ for some $j\in [m]$.
\end{claim}
\begin{proof}
We will use $X_i^r$ to denote the part in $\mcp$ containing $x_i^r$ for each $i\in[3m]$ and $r\in[3]$, and $T_j$ to denote the part in $\mcp$ containing $t_j$ for each $j\in[m]$.
Let $I:=\left\{i\in[3m]:v_{i}\in\bigcup_{i'\in[3m],r\in[3]}X_{i'}^r\right\}$ be the indices of vertices $v_i$ that are contained in some part that contains an $x_i^r$ terminal. Suppose for the sake of contradiction that $I\neq \emptyset$. 

Let $i\in[3m],r\in[3]$. Then, we have
\begin{align}
w(\delta(X_i^r))\geq 2dB+\sum_{i'\in[3m]:v_{i'}\in X_i^r}\frac{a_{i'}}{3}.\label{eqn:hp-1}
\end{align}
For an arbitrarily fixed $X_i^r$ such that $|X_i^r|\geq 3$, let $X_i^r=\{x_i^r,v_{i_1},v_{i_2},\ldots,v_{i_\ell}\}$ for some $i_1,\ldots,i_\ell\in[3m]$, where $\ell\geq 2$. In Proposition \ref{prop:two mean values}, if we choose
\begin{align*}
x_1&=2dB,\ x_4=2dB+\sum_{q=1}^\ell \frac{a_{i_q}}{3},\\
x_2&=\min\left\{2dB+\frac{a_{i_\ell}}{3},\,2dB+\sum_{q=1}^{\ell-1} \frac{a_{i_q}}{3}\right\},
\\x_3&=\max\left\{2dB+\frac{a_{i_\ell}}{3},\,2dB+\sum_{q=1}^{\ell-1} \frac{a_{i_q}}{3}\right\}, 
\end{align*}
then we have
\begin{align*}
    \left(2dB+\sum_{q=1}^\ell \frac{a_{i_q}}{3}\right)^p+(2dB)^p\geq \left(2dB+\sum_{q=1}^{\ell-1} \frac{a_{i_q}}{3}\right)^p+\left(2dB+\frac{a_{i_\ell}}{3}\right)^p.
\end{align*}
By applying this argument $\ell-1$ times, we get
\begin{align}
    &\left(2dB+\sum_{q=1}^\ell \frac{a_{i_q}}{3}\right)^p+(\ell-1)(2dB)^p\notag
    \\&\geq \left(2dB+\sum_{q=1}^{\ell-1} \frac{a_{i_q}}{3}\right)^p+\left(2dB+\frac{a_{i_\ell}}{3}\right)^p+(\ell-2)(2dB)^p\notag
    \\&\geq\ldots\notag
    \\&\geq \left(2dB+\sum_{q=1}^{2} \frac{a_{i_q}}{3}\right)^p+\left(2dB+\frac{a_{i_3}}{3}\right)^p+\ldots+\left(2dB+\frac{a_{i_\ell}}{3}\right)^p+(2dB)^p\notag
    \\&\geq \sum_{q=1}^\ell\left(2dB+\frac{a_{i_q}}{3}\right)^p.\label{eqn:hp-2}
\end{align}

We will divide parts $X_i^r$ into three categories by defining
\begin{align*}
    \mcx_1&:=\left\{X_i^r:i\in[3m],r\in[3],|X_i^r|=1\right\},
    \\\mcx_2&:=\left\{X_i^r:i\in[3m],r\in[3],|X_i^r|=2\right\},
    \\\mcx_3&:=\left\{X_i^r:i\in[3m],r\in[3],|X_i^r|\geq 3\right\}.
\end{align*}
Moreover, let us define two subsets of $I$ by
\begin{align*}
    I_2:=\left\{i'\in I:v_{i'}\in\bigcup_{X_i^r\in\mcx_2}X_i^r\right\},\;I_3:=\left\{i'\in I:v_{i'}\in\bigcup_{X_i^r\in\mcx_3}X_i^r\right\}.
\end{align*}
We note that $(I_2,I_3)$ form a partition of $I$, $|I_2|=|\mcx_2|$, and $|\mcx_1|+|\mcx_2|+|\mcx_3|=9m$.

Then the contribution of sets $X_i^r$ to the $p$th power of the $\ell_p$-norm objective value is given by
\begin{align}
    \sum_{i\in[3m],r\in[3]}w(\delta(X_i^r))^p&=\sum_{X_i^r\in \mcx_1}w(\delta(X_i^r))^p+\sum_{X_i^r\in \mcx_2}w(\delta(X_i^r))^p+\sum_{X_i^r\in \mcx_3}w(\delta(X_i^r))^p\notag
    \\&= |\mcx_1|(2dB)^p+\sum_{X_i^r\in \mcx_2}w(\delta(X_i^r))^p+\sum_{X_i^r\in \mcx_3}w(\delta(X_i^r))^p.\label{eqn:hp-3}
\end{align}

By applying \eqref{eqn:hp-1} and \eqref{eqn:hp-2} to members of $\mcx_3$, we get
\begin{align*}
    &\sum_{X_i^r\in \mcx_3}w(\delta(X_i^r))^p\geq\sum_{X_i^r\in \mcx_3} \left(2dB+\sum_{i'\in[3m]:v_{i'}\in X_i^r}\frac{a_{i'}}{3}\right)^p \quad \quad \quad \quad \text{(by \eqref{eqn:hp-1})}
    \\&=\sum_{X_i^r\in \mcx_3} \left(\left(2dB+\sum_{i'\in[3m]:v_{i'}\in X_i^r}\frac{a_{i'}}{3}\right)^p+(|X_i^r|-2)(2dB)^p\right)-\sum_{X_i^r\in\mcx_3}(|X_i^r|-2)(2dB)^p
    \\&\geq \sum_{X_i^r\in \mcx_3}\sum_{i'\in[3m]:v_{i'}\in X_i^r}\left(2dB+\frac{a_{i'}}{3}\right)^p-\sum_{X_i^r\in\mcx_3}(|X_i^r|-2)(2dB)^p \quad \quad \quad \quad  \text{(by \eqref{eqn:hp-2})}
    \\&=\sum_{i'\in I_3}\left(2dB+\frac{a_{i'}}{3}\right)^p-\sum_{X_i^r\in\mcx_3}(|X_i^r|-1)(2dB)^p+|\mcx_3|(2dB)^p
    \\&=\sum_{i'\in I_3}\left(2dB+\frac{a_{i'}}{3}\right)^p-|I_3|(2dB)^p+|\mcx_3|(2dB)^p.
\end{align*}

Moreover, we observe that
\begin{align*}
    \sum_{X_i^r\in \mcx_2}w(\delta(X_i^r))^p&\geq\sum_{X_i^r\in \mcx_2} \left(2dB+\sum_{i'\in[3m]:v_{i'}\in X_i^r}\frac{a_{i'}}{3}\right)^p=\sum_{i'\in I_2}\left(2dB+\frac{a_{i'}}{3}\right)^p.
\end{align*}

Therefore, \eqref{eqn:hp-3} implies that
\begin{align*}
    \sum_{i\in[3m],r\in[3]}&w(\delta(X_i^r))^p=|\mcx_1|(2dB)^p+\sum_{X_i^r\in \mcx_2}w(\delta(X_i^r))^p+\sum_{X_i^r\in \mcx_3}w(\delta(X_i^r))^p
    \\&\geq|\mcx_1|(2dB)^p+\sum_{i'\in I_2}\left(2dB+\frac{a_{i'}}{3}\right)^p+\sum_{i'\in I_3}\left(2dB+\frac{a_{i'}}{3}\right)^p-|I_3|(2dB)^p+|\mcx_3|(2dB)^p
    \\&=|\mcx_1|(2dB)^p+\sum_{i'\in I}\left(2dB+\frac{a_{i'}}{3}\right)^p-|I_3|(2dB)^p+|\mcx_3|(2dB)^p
    \\&=\sum_{i'\in I}\left(2dB+\frac{a_{i'}}{3}\right)^p+(|\mcx_1|+|\mcx_3|)(2dB)^p-|I_3|(2dB)^p
    \\&=\sum_{i'\in I}\left(2dB+\frac{a_{i'}}{3}\right)^p+(9m-|\mcx_2|)(2dB)^p-|I_3|(2dB)^p
    \\&=\sum_{i'\in I}\left(2dB+\frac{a_{i'}}{3}\right)^p+(9m-|I_2|)(2dB)^p-|I_3|(2dB)^p
    \\&=\sum_{i'\in I}\left(2dB+\frac{a_{i'}}{3}\right)^p+(9m-|I|)(2dB)^p
    \\&\geq \sum_{i'\in I}\left(2dB+\frac{B}{12}\right)^p+(9m-|I|)(2dB)^p \quad \quad \text{(since $a_{i'}\ge \frac{B}{4}$ $\forall i'\in [3m]$)}
    \\&=|I|\left(2d+\frac{1}{12}\right)^p B^p+(9m-|I|)(2dB)^p.
\end{align*}

Assuming $I\neq\emptyset$, the $p$th power of the $\ell_p$-norm objective value of this multiway cut is 
\begin{align*}
    &\sum_{i\in[3m],r\in[3]}w(\delta(X_i^r))^p+\sum_{j=1}^mw(\delta(T_j))^p
    \\&\geq |I|\left(2d+\frac{1}{12}\right)^p B^p+(9m-|I|)(2dB)^p+\sum_{i\notin I}w(\delta(v_i))^p
    \\&=|I|\left(2d+\frac{1}{12}\right)^p B^p+(9m-|I|)(2dB)^p+\sum_{i\notin I}a_i^p
    \\&\geq |I|\left(2d+\frac{1}{12}\right)^p B^p+(9m-|I|)(2dB)^p+(3m-|I|)\left(\frac{B}{4}\right)^p
    \\&=|I|\left(\left(2d+\frac{1}{12}\right)^p-(2d)^p-\left(\frac{1}{4}\right)^p\right)B^p+9m(2d)^pB^p+3m\left(\frac{1}{4}\right)^pB^p.
\end{align*}
Since we assumed that the $\ell_p$-norm objective value of this multiway cut is at most $(9m(2dB)^p+mB^p)^{1/p}$, we have
\begin{align*}
    0&\geq \left(\sum_{i\in[3m],r\in[3]}w(\delta(X_i^r))^p+\sum_{j=1}^mw(\delta(T_j))^p-(9m(2dB)^p+mB^p)\right)B^{-p}
    \\&\geq |I|\left(\left(2d+\frac{1}{12}\right)^p-(2d)^p-\left(\frac{1}{4}\right)^p\right)+3m\left(\frac{1}{4}\right)^p-m
    \\&>|I|\left(\left(2d+\frac{1}{12}\right)^p-(2d)^p-\left(\frac{1}{4}\right)^p\right)-m.
\end{align*}
We note that due to Proposition \ref{prop:MVT}, we have
\begin{align*}
    \left(2d+\frac{1}{12}\right)^p-(2d)^p\geq \frac{p(2d)^{p-1}}{12}=2^{p-1}\cdot p(m+1).
\end{align*}
This implies
\begin{align*}
    0>|I|\left(2^{p-1}\cdot p(m+1)-\left(\frac{1}{4}\right)^p\right)-m\geq 2^{p-1}\cdot p(m+1)-\left(\frac{1}{4}\right)^p-m \geq 0,
\end{align*}
yielding a contradiction since one of the inequalities in the sequence is strict. Therefore, we must have $I=\emptyset$.



\end{proof}

We complete the proof of the backward direction by showing the following claim which derives a YES certificate for \textsc{3-partition} from an optimal multiway cut whose $\ell_p$-norm objective value is at most $(9m(2dB)^p+mB^p)^{1/p}$.
\end{proof}

\begin{claim}
Given a multiway cut $\mcp$ of $(G,w, T)$ whose $\ell_p$-norm objective value is at most $(9m(2dB)^p+mB^p)^{1/p}$, let $S_j:=\{i\in[3m]:v_i\text{ is in the same part as }t_j \text{ in } \mcp\}$ for each $j\in[m]$. Then $(S_1,S_2,\ldots,S_m)$ is a partition of $S=[3m]$ such that $\sum_{i\in S_j}a_i=B$ for each $j\in[m]$.
\end{claim}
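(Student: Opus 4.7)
The plan is to combine the previous claim (which localizes each $v_i$ in a terminal-$t_j$ part) with a direct cut computation on the $T_j$'s and a single application of Jensen's inequality to force each $\sum_{i \in S_j} a_i$ to equal $B$.

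First, I would invoke the preceding claim to conclude that $I = \emptyset$, i.e.\ no $v_i$ lies in the part of any terminal $x_{i'}^r$. Since $\mcp$ is a partition of $V$ and each part contains exactly one terminal, each $v_i$ must then lie in the part $T_j$ containing some $t_j$. Hence $(S_1,\ldots,S_m)$ is indeed a partition of $[3m]$. Next, I would compute $w(\delta(T_j))$ using the structure of the subgraph for each $i$: since $t_j$ is an isolated vertex and the three weight-$a_i/3$ edges out of each $v_i$ go only to vertices of the $i$-th subgraph (all of which lie outside $T_j$ because $I = \emptyset$), we get $w(\delta(T_j)) = \sum_{i \in S_j} a_i$.

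The second ingredient is the lower bound on the $X_i^r$ contribution. Specializing the inequality from the previous claim to the case $I = \emptyset$ gives $w(\delta(X_i^r)) \geq 2dB$ for every $i \in [3m]$ and $r \in [3]$. Combining with the previous paragraph, the $p$-th power of the $\ell_p$-norm objective satisfies
\[
\sum_{i,r} w(\delta(X_i^r))^p + \sum_{j=1}^m w(\delta(T_j))^p \;\geq\; 9m(2dB)^p + \sum_{j=1}^m \left(\sum_{i \in S_j} a_i\right)^p.
\]
By the hypothesis that the $\ell_p$-norm objective of $\mcp$ is at most $(9m(2dB)^p + mB^p)^{1/p}$, this forces
\[
\sum_{j=1}^m \left(\sum_{i \in S_j} a_i\right)^p \;\leq\; mB^p.
\]

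Finally, I would close the argument via Jensen's inequality (Lemma \ref{lem:jensen}) applied to the strictly convex function $\mu(x)=x^p$ (recall $p>1$): writing $x_j := \sum_{i \in S_j} a_i$, we have $\sum_{j=1}^m x_j = \sum_{i=1}^{3m} a_i = mB$, and hence
\[
\frac{1}{m}\sum_{j=1}^m x_j^p \;\geq\; \left(\frac{1}{m}\sum_{j=1}^m x_j\right)^p = B^p,
\]
so $\sum_j x_j^p \geq mB^p$. The two bounds together give $\sum_j x_j^p = mB^p$, which by strict convexity of $\mu$ forces $x_j = B$ for every $j \in [m]$, as required. The only subtlety is verifying the exact identity $w(\delta(T_j)) = \sum_{i \in S_j} a_i$ from the gadget, but this follows immediately once $I=\emptyset$ rules out any $v_i$ landing in an $X_{i'}^r$ part.
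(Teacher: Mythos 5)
Your proposal is correct and follows essentially the same route as the paper: invoke the preceding claim to get $I=\emptyset$, observe that consequently each $X_i^r$ is a singleton with $w(\delta(X_i^r))=2dB$ and each $T_j$ contributes $\sum_{i\in S_j}a_i$, and then squeeze via Jensen's inequality to force $\sum_{i\in S_j}a_i=B$ for every $j$.
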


\begin{proof}
By Claim \ref{claim:np-hardness-planar-1}b, we know that $S_1,\ldots,S_j$ must form a partition of $S$. This also implies that for each $i\in[3m]$, $r\in[3]$, the set $\{x_i^r\}$ is a part in the multiway cut $\mcp$. Therefore, the $\ell_p$-norm objective value of $\mcp$ is at least 
we have
\begin{align*}
    &\sum_{i\in[3m],r\in [3]}(2dB)^p+\sum_{j=1}^mw(\delta(S_j))^p
    =9m(2dB)^p+\sum_{j=1}^mw(\delta(S_j))^p.
\end{align*}
Since we know that the $\ell_p$-norm objective value of $\mcp$ is at most $9m(2dB)^p+mB^p$, it follows that $\sum_{j=1}^mw(\delta(S_j))^p\leq mB^p$. By Jensen's inequality, we observe that
\begin{align*}
    mB^p\geq\sum_{j=1}^mw(\delta(S_j))^p=\sum_{j=1}^m \left(\sum_{i\in S_j}a_i\right)^p\geq m\cdot\left(\frac{1}{m}\sum_{j=1}^m\sum_{i\in S_j}a_i\right)^p=mB^p.
\end{align*}
Hence, all inequalities above should be equations. This happens only when $\sum_{i\in S_j}a_i=B$ for all $j\in [m]$.
\end{proof}


\section{Convex program and integrality gap}\label{sec:convex-program}



The following is a natural convex programming relaxation for \lpnormmwc on instance $(G, w, T)$ where $T=\{t_1, \ldots, t_k\}$ are the terminal vertices (the objective function can be convexified by introducing additional variables and constraints): 
\begin{align}
    \text{Minimize }&\left(\sum_{i=1}^k\left(\sum_{uv\in E}w(uv)\cdot |x(u,i)-x(v,i)|\right)^p\right)^{1/p}\text{ subject to} \label{cp}\\
    \sum_{i=1}^k x(v,i)&=1\quad\forall v\in V,\notag\\
    x(t_i,i)&=1\quad\forall i\in[k],\notag\\
    x(v,i)&\geq 0\quad\forall v\in V,\ \forall i\in[k].\notag
\end{align}

\begin{lemma}\label{lem:integrality-gap}
The convex program in \eqref{cp} has an integrality gap of at least $k^{1-1/p}/2$. 
\end{lemma}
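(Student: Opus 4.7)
The plan is to exhibit a concrete family of instances, parametrized by $k$, whose integer optimum and convex-program optimum differ by the claimed factor. The natural candidate is a star: take $V = \{v, t_1, \ldots, t_k\}$ with terminal set $T = \{t_1, \ldots, t_k\}$, and place one edge of unit weight from $v$ to each terminal~$t_i$.

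For the integer side, any feasible multiway cut $\mcp = (P_1, \ldots, P_k)$ must assign $v$ to exactly one part, say $P_i$; the remaining parts must then be singletons $\{t_j\}$. A direct inspection of the star then gives $w(\delta(P_i)) = k-1$ (the $k-1$ edges from $v$ to the other terminals) and $w(\delta(P_j)) = 1$ for every $j \neq i$. Hence every feasible integer solution has $\ell_p$-norm objective value
\[
\bigl((k-1)^p + (k-1)\bigr)^{1/p} \;\geq\; k-1.
\]

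For the convex program side, consider the symmetric fractional assignment $x(v, i) = 1/k$ for every $i \in [k]$, together with the forced values $x(t_i,i)=1$ and $x(t_i,j)=0$ for $j\neq i$. All constraints of \eqref{cp} are satisfied. For each class $i \in [k]$, the edge $v t_i$ contributes $|1/k - 1| = (k-1)/k$ and each of the $k-1$ edges $v t_j$ with $j \neq i$ contributes $|1/k - 0| = 1/k$, so
\[
\sum_{uv \in E} w(uv)\,|x(u,i) - x(v,i)| \;=\; \frac{k-1}{k} + (k-1)\cdot\frac{1}{k} \;=\; \frac{2(k-1)}{k}.
\]
Hence the convex-program objective value of this feasible fractional solution is at most
\[
\left(k \cdot \Bigl(\tfrac{2(k-1)}{k}\Bigr)^{p}\right)^{1/p} \;=\; 2(k-1)\cdot k^{1/p - 1} \;=\; \frac{2(k-1)}{k^{1-1/p}}.
\]

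Dividing the integer lower bound by this fractional upper bound yields an integrality gap of at least $k^{1-1/p}/2$, as desired. There is no real obstacle here: the argument is a direct calculation once the star instance is chosen, and the only mild care needed is to verify that the symmetric assignment $x(v,i)=1/k$ is a valid feasible point of \eqref{cp} and that the per-class quantity $2(k-1)/k$ is indeed the same for all $i$, so that the $\ell_p$-norm simplifies cleanly to $k^{1/p}$ times that common value.
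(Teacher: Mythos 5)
Your proposal is correct and takes essentially the same approach as the paper: the same star instance, the same uniform fractional assignment $x(v,i)=1/k$, and the same lower bound $k-1$ on the integer optimum followed by the same ratio computation.
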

\begin{proof}
Consider the star graph that has $k$ leaves $\{t_1, \ldots, t_k\}$ and a center vertex $v$ with all edge weights being $1$. Let the terminal vertices be the $k$ leaves. 
The optimum $\ell_p$-norm objective value of a multiway cut is 
\[((k-1)^p+k-1)^{\frac{1}{p}},\]
and it corresponds to the partition $(\{t_1,v\}, \{t_2\}, \{t_3\}, \ldots, \{t_k\})$. 
A feasible solution to the convex program \eqref{cp} is given by $x(v,i)=1/k$ for all $i\in[k]$, which yields an objective of
\[\left(k\cdot\left(\frac{k-1}{k}+(k-1)\cdot\frac{1}{k}\right)^p\right)^{\frac{1}{p}}=\frac{2k-2}{k}\cdot k^{\frac{1}{p}}.\]
This results in an integrality gap of at least
\begin{align*}
    \frac{((k-1)^p+k-1)^{\frac{1}{p}}}{\frac{2k-2}{k}\cdot k^{\frac{1}{p}}}\geq \frac{k-1}{\frac{2k-2}{k}\cdot k^{\frac{1}{p}}}=\frac{k^{1-\frac{1}{p}}}{2}.
\end{align*}
\end{proof}

Bansal et al. give an SDP relaxation for \mmmwc and show that the star graph has an integrality gap of $\Omega(k)$ for this SDP relaxation. This SDP relaxation can be generalized in a natural fashion to \lpnormmwc. The star graph still exhibits an integrality gap of $\Omega(k^{1-1/p})$ for the generalized SDP relaxation for \lpnormmwc. 

\section{Inapproximability}
\label{sec:inapprox}

In this section, we show that \lpnormmwc does not admit a $k^{1-1/p-\epsilon}$-approximation assuming the small set expansion hypothesis. In contrast, there is a trivial $O(k^{1-1/p})$-approximation (see Section \ref{sec:trivial-approx}). 
We mention that the inapproximability result in this section is 
similar to the result of Bansal et al \cite{BFKMNNS14} who showed that \mmmwc does not admit a $k^{1-\epsilon}$-approximation assuming the small set expansion hypothesis. We adapt the same ideas for \lpnormmwc. 

To prove our results, we consider \mskp: the input to this problem is a graph $G=(V,E)$ (where $n:=|V|$), an edge weight function $w: E\rightarrow \R_+$, and an integer $k\le n$. The goal is to partition $V$ into $k$ sets $P_1, \ldots, P_k$ such that $|P_i|=n/k$ for all $i\in [k]$ so as to minimize $\sum_{i=1}^k w(\delta(P_i))$. We will use $\lambda$ to denote the optimum objective value of \mskp. A partition $(P_1, \ldots, P_k)$ of $V$ is a $(\alpha, \beta)$-bicriteria approximation for \mskp if  $\sum_{i=1}^k w(\delta(P_i)) \le \alpha \lambda$ and $|P_i|\le \beta(n/k)$ for all $i\in [k]$. 
For constant $k$, it is known that a $(O(1), O(1))$-bicriteria approximation for \mskp is at least as hard as small set expansion \cite{RST12}. 
We show the following result which implies that a $k^{1-1/p-\epsilon}$-approximation is unlikely for \lpnormmwc (by setting $k=k(\epsilon)$ to be a large constant):
\begin{theorem}\label{thm: mskp bic}
If \lpnormmc admits an efficient $k^{1-1/p-\epsilon}$-approximation algorithm for some constant $\epsilon>0$, then \mskp admits a $(O(k^{2-1/p}),O(1))$-bicriteria approximation for sufficiently large $k$.
\end{theorem}




Our proof of Theorem \ref{thm: mskp bic} proceeds via the following lemma (which is the counterpart to Lemma 5.1 of \cite{BFKMNNS14}, but for \lpnormmwc).

\begin{lemma}\label{lem:inapprox}
If \lpnormmc has a polynomial-time $\gamma$-approximation algorithm, then \mskp has an efficient $(5k\gamma,9\gamma k^{1/p})$-bicriteria approximation algorithm.
\end{lemma}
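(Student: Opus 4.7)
I would mirror the strategy that Bansal et al.~\cite{BFKMNNS14} used for the $p=\infty$ case: reduce an \mskp instance to an \lpnormmc instance by attaching $k$ ``dummy'' terminals uniformly to every original vertex, so that any low-$\ell_p$-norm multiway cut is forced to produce a bicriteria balanced partition whose underlying cut value is comparable to the \mskp optimum. Concretely, given $(G=(V,E),w,k)$ with $n:=|V|$, I would first guess the optimum value $\lambda$ up to a factor of $2$ via binary search, then construct $G'=(V',E')$ by adjoining new vertices $t_1,\ldots,t_k$ and edges $vt_i$ of weight $c:=\lambda/n$ for every $v\in V$ and every $i\in[k]$, keeping the original edges of $G$ with their weights. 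Run the assumed $\gamma$-approximation on $(G',w',\{t_1,\ldots,t_k\})$ to obtain a multiway cut $(P_1,\ldots,P_k)$ and return $(P_1\setminus\{t_1\},\ldots,P_k\setminus\{t_k\})$ as the proposed equi-$k$-partition.

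The analysis splits into two parts. First, I would upper bound $\OPT_{\ell_p}(G')$ by extending an optimal \mskp partition $(P_1^\ast,\ldots,P_k^\ast)$ to a multiway cut of $G'$: since $|P_i^\ast|=n/k$, one computes $w'(\delta(P_i^\ast\cup\{t_i\}))=2c(k-1)n/k+w(\delta_G(P_i^\ast))$, and Minkowski's inequality combined with the crude bound $\bigl(\sum_i w(\delta_G(P_i^\ast))^p\bigr)^{1/p}\le\sum_i w(\delta_G(P_i^\ast))=\lambda$ then gives $\OPT_{\ell_p}(G')\le 2c(k-1)nk^{1/p-1}+\lambda=O(k^{1/p}\lambda)$ for our choice of $c$. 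Second, I would convert the resulting $\ell_p$-norm guarantee back into \mskp guarantees in two directions. For balance, note that every $v\in P_i\cap V$ contributes $(k-1)c$ worth of cut terminal edges to $w'(\delta(P_i))$, so $(k-1)c|P_i\cap V|\le w'(\delta(P_i))\le\gamma\OPT_{\ell_p}(G')$, which after substituting $c=\lambda/n$ and tracking the constants yields $|P_i\cap V|\le 9\gamma k^{1/p}\cdot n/k$. For cost, apply the power-mean inequality $\sum_i x_i\le k^{1-1/p}\bigl(\sum_i x_i^p\bigr)^{1/p}$ to $x_i:=w'(\delta(P_i))$ and subtract the fixed terminal contribution: the total weight of cut terminal edges of $G'$, summed over parts with multiplicity, is exactly $2c(k-1)n=2(k-1)\lambda$ regardless of the partition (each $vt_j$ with $v\notin P_j$ is cut, and each such edge is counted twice). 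Consequently $\sum_i w(\delta_G(P_i\setminus\{t_i\}))=\sum_i w'(\delta(P_i))-2(k-1)\lambda\le k^{1-1/p}\gamma\OPT_{\ell_p}(G')-2(k-1)\lambda\le 5k\gamma\lambda$.

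The main obstacle I anticipate is not the structure of the argument — which is essentially forced by Minkowski plus a power-mean inequality once the reduction is in place — but rather extracting the precise constants $5k\gamma$ and $9\gamma k^{1/p}$ claimed in the lemma. The simple scaling $c=\lambda/n$ already produces bounds of the correct asymptotic shape, and matching the exact prefactors is a matter of carefully carrying the $(k-1)/k$ factors through the two displays above and possibly tweaking $c$ by a constant. A subsidiary routine issue is that $\lambda$ is unknown, but the binary search is standard because both the balance and cost guarantees depend monotonically on the guess and overestimating $\lambda$ by a factor of $2$ degrades the guarantees only by a constant factor.
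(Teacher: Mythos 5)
Your proposal is correct and mirrors the paper's own proof: same reduction (attach $k$ terminal vertices with uniform edges of weight roughly $\lambda/n$, guessed via binary search), same use of Jensen/power-mean to relate the $\ell_p$ and $\ell_1$ quantities, and the same two-pronged extraction of the cost and balance guarantees from the $\ell_p$-norm bound. The only differences are cosmetic bookkeeping choices (Minkowski rather than Jensen to upper-bound $\OPT_{\ell_p}(G')$, and a double-counting argument for the aggregate terminal-edge contribution instead of the per-part expansion the paper writes out), which lead to the same constants up to minor slack.
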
 



\begin{proof}
We will follow the reduction designed by Bansal et al in Lemma 5.1 of \cite{BFKMNNS14}. Let $(G,w,k)$ be an instance of \mskp, and let $\lambda$ refer to the optimum objective value of \mskp on instance $(G,w,k)$. We will assume knowledge of a value $B\in[\lambda,2\lambda]$ by binary search. We construct an instance $(G'=(V',E'),w',T)$ of \lpnormmc as follows. 
\begin{align*}
    &V':=V\cup\{t_1,\ldots,t_k\},
    \\&E':=E\cup\{t_iv:v\in V\},
    \\&T:=\{t_1,\ldots,t_k\},
    \\&w'(e):=\begin{cases}
    w(e)& \text{ if }e\in E, \\ 
    \frac{B}{n}& \text{ if } e\in E'\setminus E.
    \end{cases}
\end{align*}

We will use $\OPT$ to refer to the optimum $\ell_p$-norm objective value of \lpnormmc on instance $(G',w',T)$. The following claim completes the proof of Lemma \ref{lem:inapprox}.
\end{proof}

\begin{claim}
If $\mcp'=(P'_1,\ldots,P'_k)$ is a multiway cut on instance $(G',w',T)$ such that with $\ell_p$-norm objective value at most $\gamma\cdot\OPT$, then the partition $\mcp=(P_1,\ldots,P_k)$ of $V$ defined by $P_i=P'_i\cap V$ for all $i\in[k]$ is a $(5k\gamma,9\gamma k^{1/p})$-bicriteria approximate optimum to \mskp.
\end{claim}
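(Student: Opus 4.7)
The plan is to translate the $\ell_p$-norm guarantee on $\mcp'$ into the two bicriteria guarantees for $\mcp$: namely, that $\sum_i w(\delta_G(P_i)) \le 5k\gamma\lambda$ and that every part satisfies $|P_i| \le 9\gamma k^{1/p}(n/k)$. The approach has three steps: upper-bound $\OPT$ in terms of $\lambda$ and $k$, invoke the $\gamma$-approximation guarantee to bound the $\ell_p$-norm of $\mcp'$, then separately derive the total-weight and per-part-size bounds from this.

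First, I would upper-bound $\OPT$ by exhibiting a feasible multiway cut of $(G', w', T)$ built from an optimal \mskp partition $(P_1^*, \ldots, P_k^*)$, which satisfies $|P_i^*| = n/k$ and $\sum_i w(\delta_G(P_i^*)) = \lambda \le B$. The natural candidate is $Q_i := P_i^* \cup \{t_i\}$, and a direct count of edges gives
\[w'(\delta_{G'}(Q_i)) = w(\delta_G(P_i^*)) + \frac{B}{n}\bigl[(n - |P_i^*|) + (k-1)|P_i^*|\bigr] \le w(\delta_G(P_i^*)) + 2B.\]
Applying Minkowski together with $\|x\|_p \le \|x\|_1$ to the first summand, the $\ell_p$-norm of this cut is at most $\lambda + 2B k^{1/p} = O(B k^{1/p})$, so $\OPT = O(B k^{1/p})$ and the $\ell_p$-norm of $\mcp'$ is $O(\gamma B k^{1/p})$.

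For the total-weight bound, every edge of $G$ is an edge of $G'$ with the same weight, so $w(\delta_G(P_i)) \le w'(\delta_{G'}(P'_i))$ for each $i$. Summing and applying the power-mean inequality $\|x\|_1 \le k^{1-1/p}\|x\|_p$ to the vector of cut values of $\mcp'$ gives
\[\sum_{i=1}^k w(\delta_G(P_i)) \le k^{1-1/p} \cdot O(\gamma B k^{1/p}) = O(\gamma k B) = O(\gamma k \lambda),\]
and with $k$ sufficiently large, a careful binary-search choice of $B \in [\lambda, 2\lambda]$ yields the explicit constant $5$. For the size bound, I would argue by contrapositive. If $|P_i| > 9\gamma n/k^{1-1/p}$ for some $i$, then the edges from vertices of $P_i$ to the $k-1$ terminals other than $t_i$ alone contribute
\[w'(\delta_{G'}(P'_i)) \ge \frac{B}{n} \cdot (k-1)|P_i| \ge \frac{9}{2}\gamma B k^{1/p},\]
using $(k-1)/k^{1-1/p} \ge k^{1/p}/2$ for $k \ge 2$. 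Since the $\ell_p$-norm of $\mcp'$ is at least $w'(\delta_{G'}(P'_i))$, this exceeds the established bound $O(\gamma B k^{1/p})$, contradicting the $\gamma$-approximation guarantee.

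The main obstacle will be the calibration of the terminal-edge weight $B/n$: it must be small enough that adding the $k$ terminal stars inflates $\OPT$ by only an $O(Bk^{1/p})$ additive term, yet large enough that any part of size exceeding $\Theta(n/k^{1-1/p})$ incurs a terminal-edge cost of order $\gamma B k^{1/p}$ that is already visible in the per-part $\ell_p$ bound on $\mcp'$. These bounds compose with the $k^{1-1/p}$ loss from the power-mean inequality to give a total cut weight linear in $k$, exactly as \mskp requires.
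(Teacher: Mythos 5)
Your proposal is correct and follows essentially the same route as the paper: upper-bound $\OPT$ via the multiway cut $(Q_i^*\cup\{t_i\})_{i}$, apply the $\gamma$-guarantee to $\mcp'$, pay a $k^{1-1/p}$ factor converting $\ell_p$ to $\ell_1$ for the total-weight bound, and read off the size bound from the terminal-edge cost $\approx (k-1)|P_i|\frac{B}{n}$; the paper phrases these steps in terms of $p$th powers and Jensen's inequality rather than Minkowski and the power-mean inequality, but these are equivalent. One small correction: your hedges about ``$k$ sufficiently large'' and a ``careful binary-search choice of $B$'' are unnecessary, since taking any $B\in[\lambda,2\lambda]$ gives $\sum_i w(\delta_G(P_i))\le k^{1-1/p}\gamma(\lambda+2Bk^{1/p})\le k\gamma\lambda+4\gamma k\lambda=5k\gamma\lambda$ directly for every $k$.
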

\begin{proof}
Let $\mcq=(Q_1,\ldots,Q_k)$ be an optimum solution to \mskp on instance $(G,w,k)$, and let $\mcq'=(Q'_1,\ldots,Q'_k)$ be a partition of $V(G')$ obtained by $Q'_i:=Q_i\cup\{t_i\}$ for each $i\in[k]$. Then, $\mcq'$ is a multiway cut for $(G', w', T)$ and the $\ell_p$-norm objective value of $\mcq'$ raised to $p$th power is
\begin{align}
    \sum_{i=1}^k w'(\delta_{G'}(Q'_i))^p&=\sum_{i=1}^k\left(w(\delta_G(Q_i))+\frac{n}{k}\cdot(k-1)\cdot\frac{B}{n}+\frac{n}{k}\cdot(k-1)\cdot\frac{B}{n}\right)^p \notag    \\
    &=\sum_{i=1}^k\left(w(\delta_G(Q_i))+\frac{k-1}{k}\cdot 2B\right)^p\geq \OPT^p, \label{eq:inapprox-1}
\end{align}
where the first $(n/k)(k-1)(B/n)$ term represents the cost of edges between $Q_i$ and $T-\{t_i\}$, and the second $(n/k)(k-1)(B/n)$ term represents the cost of edges between $V-Q_i$ and $t_i$.

Since $\mcp'$ is a $\gamma$-approximate optimum solution to \lpnormmc, we have
\begin{align}
    \gamma^p\cdot\OPT^p&\geq \sum_{i=1}^k w'(\delta_{G'}(P'_i))^p\notag
    \\&=\sum_{i=1}^k\left(w(\delta_G(P_i))+|P_i|(k-1)\frac{B}{n}+(n-|P_i|)\frac{B}{n}\right)^p\notag
    \\&=\sum_{i=1}^k\left(w(\delta_G(P_i))+B+(k-2)|P_i|\frac{B}{n}\right)^p \label{eqn:inapprox-1}
    \\&\geq k^{1-p}\left(\sum_{i=1}^k w(\delta_G(P_i))+(2k-2)B\right)^p.\qquad\text{(by Jensen's inequality)}\notag
\end{align}
Hence, 
\begin{align}
    k^{1-p}\left(\sum_{i=1}^k w(\delta_G(P_i))+(2k-2)B\right)^p
    &\leq \gamma^p\cdot\OPT^p\notag
    \\&\leq \gamma^p\sum_{i=1}^k\left(w(\delta_G(Q_i))+\frac{k-1}{k}\cdot 2B\right)^p \quad \quad \text{(by \eqref{eq:inapprox-1})}\notag
    \\&\leq \gamma^p\sum_{i=1}^k\left(\lambda+\frac{k-1}{k}\cdot 2B\right)^p\notag
    \\&=\gamma^pk\left(\lambda+\frac{k-1}{k}\cdot 2B\right)^p.\label{eqn:inapprox-2}
\end{align}
This inequality is equivalent to
\[\sum_{i=1}^k w(\delta_G(P_i))+(2k-2)B\leq k\gamma\left(\lambda+\frac{k-1}{k}\cdot 2B\right).\]
Combining the assumption that $B\in[\lambda,2\lambda]$, we have
\begin{align*}
    \sum_{i=1}^k w(\delta_G(P_i))\leq k\gamma\left(\lambda+\frac{k-1}{k}\cdot 2B\right)<k\gamma(\lambda+2B)\leq 5k\gamma\lambda.
\end{align*}
Inequalities \eqref{eqn:inapprox-1} and \eqref{eqn:inapprox-2} also imply that for every $j\in[k]$, 
\begin{align*}
    \left((k-2)|P_j|\frac{B}{n}\right)^p&\leq
    \sum_{i=1}^k\left(w(\delta_G(P_i))+B+(k-2)|P_i|\frac{B}{n}\right)^p
    \\&\leq\gamma^p\cdot\OPT^p\qquad\text{(by \eqref{eqn:inapprox-1})}
    \\&\leq \gamma^p\cdot k\left(\lambda+\frac{k-1}{k}\cdot 2B\right)^p\qquad\text{(by \eqref{eqn:inapprox-2})}
    \\&\leq \gamma^p\cdot k(3B)^p.
\end{align*}
This implies that 
\[|P_j|\leq 3\gamma k^{\frac{1}{p}}\frac{n}{k-2}\leq 9\gamma k^{\frac{1}{p}}\frac{n}{k}.\]
\end{proof}

We will use the following lemma from \cite{BFKMNNS14} to prove Theorem \ref{thm: mskp bic}.

\begin{lemma}\cite{BFKMNNS14}\label{lem: 5.3 of BFKMNNS}
If \mskp has an efficient $(\alpha,k^{1-\epsilon})$-bicriteria approximation algorithm for some $\epsilon>0$, then \mskp also has an efficient $(\alpha\log\log k,3^{2/\epsilon})$-bicriteria approximation algorithm.
\end{lemma}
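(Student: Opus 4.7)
This is a boosting statement: starting from a bicriteria algorithm for \mskp with weak size blowup $k^{1-\epsilon}$, the goal is to amplify it into one whose size blowup is a constant (depending only on $\epsilon$), at the price of a mild $\log\log k$ factor in the cost. The plan is to invoke the given bicriteria algorithm $A$ iteratively, treating each ``too-large'' part produced by one iteration as a separate \mskp sub-instance for the next iteration.

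Concretely, I would first run $A$ on the input $(G, w, k)$ to obtain an initial partition whose parts have size at most $\beta_1(n/k)$ with $\beta_1 = k^{1-\epsilon}$ and cut cost at most $\alpha\lambda$. In iteration $t\geq 2$, for each current piece $P$ whose size $s$ still exceeds $3^{2/\epsilon}(n/k)$, I would run $A$ on the induced subgraph $G[P]$ with the number of target parts chosen so the desired sub-part size remains $n/k$. The invariant I aim to maintain is a recurrence of the form $\beta_{t+1}\leq\beta_t^{1-\epsilon}$ on the maximum blowup factor $\beta_t$ across pieces at the end of iteration $t$, which follows directly from applying the $(\alpha, k^{1-\epsilon})$-guarantee of $A$ to a piece of size $\beta_t(n/k)$ treated as an \mskp instance on $\beta_t$ parts. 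Solving this recurrence gives $\beta_t\leq k^{(1-\epsilon)^t}$, and a short calculation shows that after $T = O(\log\log k)$ iterations one has $\beta_T\leq 3^{2/\epsilon}$.

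For the cost bookkeeping, the key observation is that in each iteration, the total additional cut cost is at most $\alpha\lambda$. Fix an optimal \mskp solution $\mcq^*=(Q_1^*,\ldots,Q_k^*)$ with cost $\lambda$. For a current piece $P$, the intersection partition $\{Q_j^*\cap P:j\in[k]\}$ supplies a feasible candidate for the \mskp sub-instance on $G[P]$ with sub-parts of size at most $n/k$, whose cut cost within $G[P]$ is the $w$-weight of edges of $G[P]$ crossing $\mcq^*$. Since the pieces of the current partition are vertex-disjoint, these sub-problem candidate costs are edge-disjoint and sum to at most $\lambda$, so the sub-problem optima sum to at most $\lambda$. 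Applying the $\alpha$-approximation guarantee of $A$ piecewise therefore adds at most $\alpha\lambda$ to the total cut cost in each iteration, and summing over $T$ iterations yields $T\alpha\lambda = O(\alpha\log\log k\cdot\lambda)$.

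The main obstacle will be the tight bookkeeping at the boundary between iterations: one must show that the restriction-of-optimum argument provides a valid feasibility certificate for the sub-problem in each iteration (in particular, handling the relaxed feasibility when $|P|/(n/k)$ is not an integer by allowing a constant-factor slack in the permitted sub-part size), and pinning down the constants in the recurrence $\beta_{t+1}\leq\beta_t^{1-\epsilon}$ so that the exponent reaches $3^{2/\epsilon}$ in the stated $\log\log k$ iterations (with the chosen target blowup $3^{2/\epsilon}$ absorbing the slack accumulated across rounds). Both steps closely follow the template of Bansal, Feige, Krauthgamer, Makarychev, Nagarajan, Naor, and Schwartz.
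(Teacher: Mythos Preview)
The paper does not give its own proof of this lemma; it is quoted verbatim from \cite{BFKMNNS14} and used as a black box. Your proposal correctly reconstructs the iterative-refinement argument of that reference: repeatedly apply the given bicriteria algorithm to each oversized piece, use the restriction of a global optimum to certify that the sub-instance optima sum to at most $\lambda$ per round, and track the blowup recurrence $\beta_{t+1}\le\beta_t^{1-\epsilon}$ to reach a constant after $O(\log\log k)$ rounds. There is nothing to compare against in this paper beyond the citation, and your sketch is faithful to the original source.
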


We now prove Theorem \ref{thm: mskp bic}. 
\begin{proof}[Proof of Theorem \ref{thm: mskp bic}]
If \lpnormmc has an efficient $k^{1-1/p-\epsilon}$-approximation algorithm for some $\epsilon>0$, then Lemma \ref{lem:inapprox} implies that \mskp has an efficient $(5k^{2-1/p-\epsilon},9k^{1-\epsilon})$-bicriteria approximation algorithm. For $k$ sufficiently large, we have $9k^{1-\epsilon}\leq k^{1-\epsilon/2}$. Lemma \ref{lem: 5.3 of BFKMNNS} then implies that \mskp has a $(O(k^{2-1/p}),3^{4/\epsilon})$-bicriteria approximation. This completes the proof of Theorem \ref{thm: mskp bic}.
\end{proof}

\subsection{A trivial $O(k^{1-1/p})$-approximation}\label{sec:trivial-approx}
In this section, we show a trivial approximation algorithm for \lpnormmwc that achieves an approximation factor of $O(k^{1-1/p})$. Given an instance $(G,w,T)$ of \lpnormmwc, let the set $T$ of terminals be $\{t_1, \ldots, t_k\}$. For each $i\in [k]$, we compute a minimum $(t_i, T-t_i)$-cut, say $(S_i, V-S_i)$. The sets $S_1, \ldots, S_k$ can be uncrossed via posimodularity to ensure that each $(S_i, V-S_i)$ is still a minimum $(t_i, T-T_i)$-cut and moreover $S_i \cap S_j = \emptyset$ for all distinct $i, j\in [k]$. Let $R:=V-\cup_{i=1}^k S_i$. We will show that the multiway cut $(S_1 \cup R, S_2, S_3, \ldots, S_k)$ is a $O(k^{1-1/p})$-approximation for \lpnormmwc. 

Let $(P_1, \ldots, P_k)$ denote an optimum solution for \lpnormmwc. Since $(S_i, V-S_i)$ is a min $(t_i, T-t_i)$-cut, we have that $w(\delta(S_i))\le w(\delta(P_i))$. We also note that $w(\delta(S_1\cup R)) \le w(\delta(S_1))+ w(\delta(R)) \le 2\sum_{i=1}^k w(\delta(S_i))$ since $\delta(R) \subseteq  \cup_{i=1}^k \delta(S_i)$. 
Let us consider the $p$th power of the $\ell_p$-norm objective value of $(S_1\cup R, S_2, \ldots, S_k)$:
\begin{align*}
    w(\delta(S_1\cup R))^p + \sum_{i=2}^k w(\delta(S_i))^p
    & \le \left(2\sum_{i=1}^k w(\delta(S_i))\right)^p + \sum_{i=2}^k w(\delta(S_i))^p \\
    & \le 2^p k^{p-1} \sum_{i=1}^k w(\delta(S_i))^p + \sum_{i=2}^k w(\delta(S_i))^p \quad \quad \quad \quad \text{(by Jensen)}\\
    & \le 2^p k^{p-1} \sum_{i=1}^k w(\delta(S_i))^p\\
    & \le 2^p k^{p-1} \sum_{i=1}^k w(\delta(P_i))^p. 
\end{align*}
Hence, the $\ell_p$-norm objective value of $(S_1\cup R, S_2, \ldots, S_k)$ is within a $(2k^{1-1/p})$-factor of the optimum $\ell_p$-norm objective value.

\section{Conclusion}
\label{sec:conclusion}
In this work, we introduced \lpnormmwc for $p\ge 1$ as a unified generalization of \mwc and \mmmwc. We showed that \lpnormmwc is NP-hard for constant number of terminals or in planar graphs for every $p\ge 1$. The natural convex program for \lpnormmwc has an integrality gap of $\Omega(k^{1-1/p})$ and the problem is $(k^{1-1/p-\epsilon})$-inapproximable for any constant $\epsilon>0$ assuming the small set expansion hypothesis, where $k$ is the number of terminals in the input instance. The inapproximability result suggests that a dependence on $n$ in the approximation factor is unavoidable if we would like to obtain an approximation factor that is better than the trivial $O(k^{1-1/p})$-factor. 
On the algorithmic side, we gave an 
$O(\sqrt{\log^{3} n \log{k}})$-approximation 
(i.e., an $O(\log^2{n})$-approximation), where $n$ is the number of vertices in the input graph. Our results suggest that the approximability behaviour of \lpnormmwc exhibits a sharp transition from $p=1$ to $p>1$. Our work raises several open questions. We mention a couple of them: (1) Can we achieve an $O(\log{n})$-approximation for \lpnormmwc for every $p\ge 1$? We recall that when $p=\infty$, the current best approximation factor is indeed $O(\log{n})$ \cite{BFKMNNS14}. (2) Is there a polynomial-time  algorithm for \lpnormmwc for any given $p$ that achieves an approximation factor that smoothly interpolates between the best possible approximation for $p=1$ and the best possible approximation for $p=\infty$---e.g., is there an $O(\log^{1-1/p}{n})$-approximation?



\bibliographystyle{amsplain}
\bibliography{references}

\providecommand{\bysame}{\leavevmode\hbox to3em{\hrulefill}\thinspace}
\providecommand{\MR}{\relax\ifhmode\unskip\space\fi MR }
\providecommand{\MRhref}[2]{%
  \href{http://www.ams.org/mathscinet-getitem?mr=#1}{#2}
}
\providecommand{\href}[2]{#2}
\begin{thebibliography}{10}

\bibitem{AKS19}
S.~Ahmadi, S.~Khuller, and B.~Saha, \emph{Min-max correlation clustering via
  multicut}, Integer Programming and Combinatorial Optimization, IPCO, 2019,
  pp.~13--26.

\bibitem{AMM17}
H.~Angelidakis, Y.~Makarychev, and P.~Manurangsi, \emph{An improved integrality
  gap for the {C}{\u{a}}linescu-{K}arloff-{R}abani relaxation for multiway
  cut}, Integer Programming and Combinatorial Optimization, IPCO, 2017,
  pp.~39--50.

\bibitem{BFKMNNS14}
N.~Bansal, U.~Feige, R.~Krauthgamer, K.~Makarychev, V.~Nagarajan, J.~Naor, and
  R.~Schwartz, \emph{Min-max graph partitioning and small set expansion}, SIAM
  Journal on Computing \textbf{43} (2014), no.~2, 872--904.

\bibitem{BCKM20}
K.~B\'{e}rczi, K.~Chandrasekaran, T.~Kir\'{a}ly, and V.~Madan, \emph{Improving
  the integrality gap for multiway cut}, Mathematical Programming \textbf{183}
  (2020), 171--193.

\bibitem{BNS13}
N.~Buchbinder, J.~Naor, and R.~Schwartz, \emph{Simplex partitioning via
  exponential clocks and the multiway cut problem}, Proceedings of the
  forty-fifth annual ACM Symposium on Theory of Computing, STOC, 2013,
  pp.~535--544.

\bibitem{BSW17}
N.~Buchbinder, R.~Schwartz, and B.~Weizman, \emph{Simplex transformations and
  the multiway cut problem}, Proceedings of the twenty-eighth annual ACM-SIAM
  Symposium on Discrete Algorithms, SODA, 2017, pp.~2400--2410.

\bibitem{CKR00}
G.~C{\u{a}}linescu, H.~Karloff, and Y.~Rabani, \emph{An improved approximation
  algorithm for multiway cut}, Journal of Computer and System Sciences
  \textbf{60} (2000), no.~3, 564 -- 574.

\bibitem{CC21}
K.~Chandrasekaran and C.~Chekuri, \emph{Min-max partitioning of hypergraphs and
  symmetric submodular functions}, Proceedings of the thirty-second annual
  ACM-SIAM Symposium on Discrete Algorithms, SODA, 2021, pp.~1026--1038.

\bibitem{CGS17}
M.~Charikar, N.~Gupta, and R.~Schwartz, \emph{Local guarantees in graph cuts
  and clustering}, Integer Programming and Combinatorial Optimization, IPCO,
  2017, pp.~136--147.

\bibitem{chekuri-ene-11}
C.~Chekuri and A.~Ene, \emph{Submodular cost allocation problem and
  applications}, International Colloquium on Automata, Languages and
  Programming, ICALP, 2011, pp.~354--366.

\bibitem{CCT06}
K.~Cheung, W.~Cunningham, and L.~Tang, \emph{Optimal 3-terminal cuts and linear
  programming}, Mathematical Programming \textbf{106} (2006), no.~1, 1--23.

\bibitem{Chvatal84}
V.~Chv\'{a}tal, \emph{Recognizing decomposable graphs}, Journal of Graph Theory
  \textbf{8} (1984), 51--53.

\bibitem{DJPSY94}
E.~Dahlhaus, D.~Johnson, C.~Papadimitriou, P.~Seymour, and M.~Yannakakis,
  \emph{The complexity of multiterminal cuts}, SIAM Journal on Computing
  \textbf{23} (1994), no.~4, 864--894.

\bibitem{KMZ19}
S.~Kalhan, K.~Makarychev, and T.~Zhou, \emph{Correlation clustering with local
  objectives}, Advances in Neural Information Processing Systems 32, 2019,
  pp.~9346--9355.

\bibitem{KKSTY04}
D.~Karger, P.~Klein, C.~Stein, M.~Thorup, and N.~Young, \emph{Rounding
  algorithms for a geometric embedding of minimum multiway cut}, Mathematics of
  Operations Research \textbf{29} (2004), no.~3, 436--461.

\bibitem{MNRS08}
R.~Manokaran, J.~Naor, P.~Raghavendra, and R.~Schwartz, \emph{{SDP} gaps and
  {UGC} hardness for multiway cut, 0-extension, and metric labeling},
  Proceedings of the fortieth annual ACM Symposium on Theory of Computing,
  STOC, 2008, pp.~11--20.

\bibitem{Marx06}
D.~Marx, \emph{Parameterized graph separation problems}, Theoretical Computer
  Science \textbf{351} (2006), no.~3, 394--406.

\bibitem{PM18}
G.~Puleo and O.~Milenkovic, \emph{Correlation clustering and biclustering with
  locally bounded errors}, IEEE Transactions on Information Theory \textbf{64}
  (2018), 4105--4119.

\bibitem{RST12}
P.~Raghavendra, D.~Steurer, and M.~Tulsiani, \emph{Reductions between expansion
  problems}, IEEE Conference on Computational Complexity, CCC, 2012,
  pp.~64--73.

\bibitem{SV14}
A.~Sharma and J.~Vondr\'{a}k, \emph{Multiway cut, pairwise realizable
  distributions, and descending thresholds}, Proceedings of the forty-sixth
  annual ACM Symposium on Theory of Computing, STOC, 2014, pp.~724--733.

\bibitem{ST04}
Z.~Svitkina and {\'E}.~Tardos, \emph{Min-max multiway cut}, Approximation,
  Randomization, and Combinatorial Optimization. Algorithms and Techniques,
  APPROX, 2004, pp.~207--218.

\end{thebibliography}


\end{document}